\documentclass[11pt]{article}
\usepackage{PRIMEarxiv}
\usepackage{amsmath,amssymb,amsfonts,amsthm,mathtools}
\usepackage{booktabs}
\usepackage[hidelinks]{hyperref}
\usepackage[nameinlink,capitalize,noabbrev]{cleveref}
\usepackage{enumitem}
\usepackage{tikz}
\usetikzlibrary{positioning}
\usepackage{graphicx}
\usepackage{caption}
\usepackage{algorithm}
\usepackage{algpseudocode}
\usepackage{fancyhdr}

\title{On the Complexity of Minimum Riesz s-Energy Subset Selection in Euclidean and Ultrametric Spaces}

\author{%
Michael T. M. Emmerich\textsuperscript{1}
\And
Ksenia Pereverdieva\textsuperscript{2}
\And
Andr\'e Deutz\textsuperscript{2}\\[0.5em]
\textsuperscript{1}Faculty of Information Technology, University of Jyv\"askyl\"a, Finland\\
\textsuperscript{2}Leiden Institute of Advanced Computer Science, Leiden University, The Netherlands
}
\date{}

\newcommand{\setcleanrunningheader}{%
  \pagestyle{fancy}%
  \fancyhf{}%
  \fancyhead[L]{\footnotesize\itshape Minimum Riesz $s$-Energy Subset Selection}%
  \fancyfoot[C]{\thepage}%
  \renewcommand{\headrulewidth}{0.4pt}%
}
\AtBeginDocument{\setcleanrunningheader}

\newtheorem{theorem}{Theorem}
\newtheorem{proposition}{Proposition}
\newtheorem{definition}{Definition}
\newtheorem{lemma}{Lemma}
\newtheorem{remark}{Remark}
\newtheorem{corollary}{Corollary}

\newcommand{\Es}[2]{E_{#1}\!\left(#2\right)}
\newcommand{\MM}[1]{\operatorname{MPD}\!\left(#1\right)}
\newcommand{\bin}[1]{\binom{#1}{2}}

\definecolor{rieszffdd55}{HTML}{FFDD55}
\definecolor{riesz5580ff}{HTML}{5580FF}
\definecolor{rieszff55af}{HTML}{FF55AF}
\definecolor{riesz670133}{HTML}{670133}
\definecolor{riesz999999}{HTML}{999999}
\definecolor{riesz4d4d4d}{HTML}{4D4D4D}
\definecolor{rieszffe680}{HTML}{FFE680}
\definecolor{rieszred}{HTML}{FF0101}

\newcommand{\rieszdisc}[3]{%
  \path[draw=black,line width=0.266pt,dash pattern=on 1pt off 1pt,fill=riesz#3,fill opacity=0.218254]
    (#1,#2) ellipse [x radius=10.488082, y radius=10.130533];%
}
\newcommand{\rieszdot}[2]{%
  \path[fill=black,draw=none] (#1,#2) ellipse [x radius=0.834279, y radius=0.893871];%
}

\newcommand{\TikzBooleanCircuit}{%
\begin{tikzpicture}[x=1pt,y=-1pt]
  \path[use as bounding box] (0,0) rectangle (210,225);
  \draw[line width=1pt] (31.568,10.067) -- (31.027,220.589) -- (45.974,220.207) -- (46.165,199.951) -- (166.171,200.333);
  \draw[line width=1pt] (165.216,195.365) -- (45.592,195.174) -- (46.356,125.425) -- (165.407,125.043);
  \draw[line width=1pt] (45.783,119.119) -- (165.598,120.265);
  \draw[line width=1pt] (46.165,118.737) -- (45.592,49.943) -- (165.407,50.134);
  \draw[line width=1pt] (45.974,44.975) -- (165.025,44.784);
  \draw[line width=1pt] (45.783,9.623) -- (45.783,44.784);

  \draw[rieszred,line width=1pt] (61.261,10.387) -- (60.879,220.398);
  \draw[rieszred,line width=1pt] (60.688,220.016) -- (75.976,220.207) -- (75.976,210.461) -- (175.726,209.697) -- (175.535,205.110);
  \draw[rieszred,line width=1pt] (170.375,205.493) -- (76.167,205.110) -- (76.167,60.071) -- (176.490,59.880) -- (176.299,54.912);
  \draw[rieszred,line width=1pt] (171.140,54.721) -- (75.593,55.294) -- (75.976,10.769);
  \draw[rieszred,line width=1pt] (105.977,10.005) -- (106.168,220.207) -- (121.073,220.207) -- (120.882,189.823) -- (170.184,190.205) -- (170.757,189.823);
  \draw[rieszred,line width=1pt] (176.681,190.396) -- (176.299,185.237) -- (120.882,185.237) -- (121.073,134.979) -- (175.917,135.362) -- (175.917,130.393);
  \draw[rieszred,line width=1pt] (171.140,130.011) -- (120.691,129.629) -- (120.691,10.769);
  \draw[rieszred,line width=1pt] (1.129,10.652) -- (0.502,220.697) -- (15.983,220.488) -- (16.401,115.256) -- (170.797,115.256);
  \draw[rieszred,line width=1pt] (175.817,115.047) -- (175.608,110.863);
  \draw[rieszred,line width=1pt] (175.399,110.444) -- (16.401,109.817) -- (15.774,40.151) -- (170.797,40.151);
  \draw[rieszred,line width=1pt] (175.608,39.732) -- (175.608,34.711) -- (15.983,35.130) -- (15.564,10.861);

  \draw[line width=1pt] (76.570,167.575) -- (76.152,162.764) -- (79.290,162.973) -- (78.871,156.906) -- (76.570,156.906) -- (76.152,152.512);
  \draw[line width=1pt] (121.346,168.090) -- (120.928,163.278) -- (124.066,163.487) -- (123.648,157.420) -- (121.346,157.420) -- (120.928,153.027);
  \draw[line width=1pt] (106.394,167.683) -- (105.976,162.872) -- (109.114,163.081) -- (108.696,157.014) -- (106.394,157.014) -- (105.976,152.620);

  \node[anchor=west,font=\bfseries\scriptsize,text=rieszred] at (168.914,49.356) {Clause Point};
  \node[anchor=west,font=\bfseries\scriptsize,text=rieszred] at (169.246,124.437) {Clause Point};
  \node[anchor=west,font=\bfseries\scriptsize,text=rieszred] at (168.797,200.755) {Clause Point};
  \node[anchor=west,font=\bfseries\scriptsize,text=rieszred] at (5.313,5.631) {x1};
  \node[anchor=west,font=\bfseries\scriptsize] at (35.230,5.213) {x2};
  \node[anchor=west,font=\bfseries\scriptsize,text=rieszred] at (64.937,5.631) {x3};
  \node[anchor=west,font=\bfseries\scriptsize,text=rieszred] at (109.917,5.311) {xn};
\end{tikzpicture}%
}

\newcommand{\TikzWireGadget}{%
\begin{tikzpicture}[x=1pt,y=-1pt]
  \path[use as bounding box] (0,0) rectangle (76.276,64.572);
  \rieszdisc{25.846}{13.420}{670133}\rieszdot{25.803}{13.360}
  \rieszdisc{50.155}{14.090}{670133}\rieszdot{50.112}{14.029}
  \rieszdisc{20.396}{24.334}{ff55af}\rieszdisc{44.597}{53.499}{ff55af}
  \rieszdot{44.555}{53.439}\rieszdot{20.353}{24.273}
  \rieszdisc{63.802}{39.960}{ff55af}\rieszdot{63.759}{39.899}
  \rieszdisc{38.429}{10.263}{ff55af}\rieszdot{38.387}{10.203}
  \rieszdisc{62.321}{16.808}{ff55af}\rieszdot{62.279}{16.748}
  \rieszdisc{65.654}{28.054}{670133}\rieszdot{65.612}{27.994}
  \rieszdisc{56.054}{49.568}{670133}\rieszdot{56.012}{49.508}
  \rieszdisc{32.767}{53.004}{670133}\rieszdot{32.725}{52.944}
  \rieszdisc{18.051}{36.045}{670133}\rieszdot{18.009}{35.985}
  \rieszdisc{21.664}{48.821}{ff55af}\rieszdot{21.622}{48.761}
\end{tikzpicture}%
}

\newcommand{\TikzCrossoverGadget}{%
\begin{tikzpicture}[x=1pt,y=-1pt]
  \path[use as bounding box] (0,0) rectangle (65.619,65.171);
  \rieszdisc{46.739}{18.559}{ffe680}\rieszdot{46.612}{18.583}
  \rieszdisc{11.005}{54.908}{ffe680}\rieszdot{10.878}{54.932}
  \rieszdisc{54.727}{10.361}{ffdd55}\rieszdot{54.600}{10.385}
  \rieszdisc{18.666}{18.491}{999999}\rieszdot{18.540}{18.515}
  \rieszdisc{26.488}{38.448}{5580ff}\rieszdot{26.446}{38.388}
  \rieszdisc{38.578}{38.507}{5580ff}\rieszdot{38.536}{38.446}
  \rieszdisc{39.023}{26.493}{5580ff}\rieszdot{38.981}{26.432}
  \rieszdisc{26.878}{26.301}{5580ff}\rieszdot{26.835}{26.240}
  \rieszdisc{18.389}{46.918}{ffdd55}\rieszdot{18.515}{46.894}
  \rieszdisc{54.998}{54.617}{999999}\rieszdot{55.125}{54.593}
  \rieszdisc{10.621}{10.264}{4d4d4d}\rieszdot{10.494}{10.288}
  \rieszdisc{46.989}{46.176}{4d4d4d}\rieszdot{46.862}{46.200}
\end{tikzpicture}%
}

\newcommand{\TikzClauseGadget}{%
\begin{tikzpicture}[x=1pt,y=-1pt]
  \path[use as bounding box] (0,0) rectangle (72.971,75.065);
  \rieszdisc{22.009}{51.988}{ffe680}\rieszdot{21.882}{52.012}
  \rieszdisc{21.725}{64.801}{ffdd55}\rieszdot{21.598}{64.825}
  \rieszdisc{62.350}{42.667}{999999}\rieszdot{62.223}{42.691}
  \rieszdisc{36.448}{34.936}{5580ff}\rieszdot{36.405}{34.876}
  \rieszdisc{40.645}{41.674}{5580ff}\rieszdot{40.602}{41.614}
  \rieszdisc{32.238}{41.540}{5580ff}\rieszdot{32.195}{41.479}
  \rieszdisc{10.621}{42.495}{ffdd55}\rieszdot{10.748}{42.471}
  \rieszdisc{50.876}{64.757}{999999}\rieszdot{51.003}{64.733}
  \rieszdisc{51.074}{52.037}{4d4d4d}\rieszdot{50.948}{52.061}
  \rieszdisc{24.740}{18.540}{670133}\rieszdot{24.698}{18.479}
  \rieszdisc{48.832}{18.789}{670133}\rieszdot{48.789}{18.728}
  \rieszdisc{16.585}{10.263}{ff55af}\rieszdot{16.542}{10.203}
  \rieszdisc{36.577}{18.666}{ff55af}\rieszdot{36.535}{18.606}
  \rieszdisc{56.765}{10.395}{ff55af}\rieszdot{56.723}{10.334}
\end{tikzpicture}%
}

\newcommand{\TikzHexagonalPacking}{%
\begin{tikzpicture}[x=1pt,y=-1pt]
  \path[use as bounding box] (0,0) rectangle (93.898,102.160);
  \rieszdisc{46.849}{71.735}{ffdd55}\rieszdot{46.722}{71.759}
  \rieszdisc{46.707}{91.896}{ffdd55}\rieszdot{46.580}{91.920}
  \rieszdisc{10.761}{71.493}{ffdd55}\rieszdot{10.634}{71.517}
  \rieszdisc{28.705}{81.166}{ffdd55}\rieszdot{28.579}{81.190}
  \rieszdisc{10.621}{51.025}{ffdd55}\rieszdot{10.494}{51.049}
  \rieszdisc{10.761}{30.557}{ffdd55}\rieszdot{10.634}{30.581}
  \rieszdisc{46.851}{51.310}{5580ff}\rieszdot{46.809}{51.250}
  \rieszdisc{82.709}{72.169}{ffdd55}\rieszdot{82.836}{72.145}
  \rieszdisc{64.876}{61.715}{ffdd55}\rieszdot{65.003}{61.691}
  \rieszdisc{65.138}{40.758}{ffdd55}\rieszdot{65.265}{40.735}
  \rieszdisc{28.809}{40.754}{ffdd55}\rieszdot{28.936}{40.730}
  \rieszdisc{47.285}{30.745}{ffdd55}\rieszdot{47.411}{30.721}
  \rieszdisc{28.845}{20.272}{ffdd55}\rieszdot{28.718}{20.296}
  \rieszdisc{46.926}{10.263}{ffdd55}\rieszdot{46.799}{10.287}
  \rieszdisc{65.827}{20.147}{ffdd55}\rieszdot{65.953}{20.123}
  \rieszdisc{83.041}{31.251}{ffdd55}\rieszdot{83.168}{31.227}
  \rieszdisc{83.277}{51.728}{ffdd55}\rieszdot{83.150}{51.752}
  \rieszdisc{28.535}{61.336}{ffdd55}\rieszdot{28.408}{61.360}
  \rieszdisc{64.485}{82.048}{ffdd55}\rieszdot{64.612}{82.024}
\end{tikzpicture}%
}

\begin{document}
\maketitle
\setcleanrunningheader

\begin{abstract}
We study the computational complexity of exact cardinality-constrained minimum Riesz $s$-energy subset selection in finite metric spaces: given $n$ points, select $k<n$ points of minimum Riesz $s$-energy. The objective sums inverse-power pair interactions and therefore promotes well-separated subsets; as $s$ becomes large, it increasingly approaches a bottleneck criterion governed by the closest selected pair, linking it to minimum pairwise distance (MPD). Building on the general-metric NP-hardness result of Pereverdieva et al. (2025), we prove that NP-hardness persists for point sets in the Euclidean plane when $s$ is part of the input. In contrast, finite ultrametric spaces form an exact tractable regime: on rooted binary ultrametric trees with $n$ leaves, an optimal size-$k$ subset can be computed by dynamic programming in $O(nk^2)$ time. We also discuss the ordered one-dimensional Euclidean case, where the classical MPD objective admits simple dynamic programming, but the additive Riesz energy does not appear to allow the same state compression. Finally, we explain why one natural route to fixed-$s$ Euclidean hardness does not close: Fowler-style \textsc{3SAT} gadgets, together with zeta-function bounds for far-field interactions, show why this approach still requires an exponent depending on $k$. Together, these results provide a compact complexity landscape for a natural diversity or dispersion objective, distinguishing Euclidean hardness, ultrametric tractability, and the ordered one-dimensional case.
\end{abstract}

\noindent\textbf{Keywords:} Riesz $s$-Energy; Subset Selection; Computational Complexity; Ultrametrics; Minimum Pairwise Distance

\section{Introduction}

Riesz energies are a classical family of pairwise interaction objectives. For exponent $s>0$, nearby points interact strongly through the term $d(x,y)^{-s}$, and minimizing the total interaction encourages well-separated configurations. This viewpoint is central in potential theory and in the study of energy-minimizing point distributions on spheres and manifolds, where it is shown to lead to uniformly spaced configurations; see, for example, the survey of Brauchart and Grabner~\cite{BrauchartGrabner2015} and the reference work \cite{BorodachovHardinSaff2019}. Beyond this classical setting, minimum-subset selection for discrete Riesz $s$-energy has recently received broader attention in applied multiobjective optimization and diversity maximization, where it is used to extract well-distributed finite Pareto-front approximations and to compare diversity indicators, and it has also appeared in closely related geometric optimization problems in theoretical computer science~\cite{Atta2026,FalconCardonaUribeRosas2024,PereverdievaEtAl2025}. 

A useful perspective is that, on every fixed finite candidate set, minimum
Riesz $s$-energy interpolates between pairwise repulsion and minimum pairwise distance
dispersion. Indeed, as $s\to\infty$, the contribution of the closest selected
pair dominates the sum, so minimizing Riesz $s$-energy tends to maximizing the
minimum pairwise distance. In this sense, the MPD optimization problem
appears as the large-$s$ limit of minimum Riesz $s$-energy subset selection;
see Chapter~3 of Borodachov, Hardin, and
Saff~\cite{BorodachovHardinSaff2019} and also Atta~\cite{Atta2026}.

Max--min diversity has been studied for decades, not only in abstract metric
spaces but also in geometric settings, where hardness arguments are closely
related to geometric independent set and packing constructions~\cite{ghosh1996computational,RaviRosenkrantzTayi1994,FowlerPatersonTanimoto1981}. Our Euclidean-plane hardness proof belongs to this line of work. At the same time, the Riesz objective is structurally subtler: unlike minimum pairwise distance (MPD), it depends on the full collection of pairwise interactions rather than only on the closest selected pair.

In this paper we study the cardinality-constrained subset-selection problem for the discrete Riesz $s$-energy. If $S$ is a $k$-subset of a metric space, then the objective aggregates all pairwise interactions $d(x,y)^{-s}$ over distinct selected points. Thus it differs from the maximum-sum dispersion objective, which directly rewards large distances, and from the classical MPD objective, which is determined solely by the closest selected pair. At the same time, for large $s$ the Riesz objective increasingly concentrates on the smallest selected distance. In that sense it forms a natural bridge between pairwise-sum repulsion and minimum pairwise distance (MPD).

Our main theme is a sharp contrast between hard general metric cases and a tractable hierarchical regime. A general-metric NP-hardness result was already established by Pereverdieva et al.~\cite{PereverdievaEtAl2025}. Building on that hard side of the landscape, we show that NP-hardness persists in the Euclidean plane when $s$ is part of the input. By contrast, ultrametric spaces admit an exact dynamic programming algorithm. This tractable case is still nontrivial: one must distribute a fixed budget across competing branches of a hierarchy under a global objective, but the ultrametric structure collapses every cross-subtree interaction to a simple cardinality-dependent term.

The paper is deliberately selective. Rather than attempting a full survey of all geometric and parameterized variants, we isolate one coherent picture: hardness in general metrics, exact tractability on ultrametrics, and hardness in Euclidean spaces. Throughout the study, we relate Riesz energy to minimum pairwise distance (MPD), treating it as a limiting case.. This makes the theorem flow short and leaves the unresolved directions visible rather than buried.

\paragraph{Contributions.}
The paper has five principal contributions, one of which is contextual.
\begin{itemize}[leftmargin=1.5em]
\item For general finite metric spaces, we reproduce and refine the NP-hardness result of Pereverdieva et al.~\cite{PereverdievaEtAl2025}. For completeness and in order to keep the paper self-contained, we include a short hardness proof in our notation and make the reduction argument explicit.
\item We prove Euclidean NP-hardness in the plane when $s$ is part of the input.
\item We identify finite ultrametric spaces as a tractable island by deriving an exact dynamic programming algorithm running in $O(nk^2)$ time on rooted binary ultrametric trees.
\item We record the one-dimensional comparison with minimum pairwise distance (MPD): on the line, the classical MPD objective admits a Bellman-style dynamic program, whereas no analogous state compression is available for minimum Riesz $s$-energy. This clarifies why the one-dimensional Euclidean case remains structurally delicate.
\item For the minimum pairwise distance (MPD) comparison, we use the known finite-instance fact that for sufficiently large $s$, every Riesz-energy minimizer is MPD, and we highlight the implications of this limiting case for hardness and tractability.
\end{itemize}
These results leave a short list of structurally meaningful open cases, most notably fixed-exponent Euclidean hardness.

\section{Preliminaries}

\begin{definition}[Minimum Riesz $s$-energy subset selection]
Let $(X,d)$ be a finite metric space and let $s>0$ be fixed. For a subset $S\subseteq X$ with $|S|\ge 2$, define the unordered-pair Riesz $s$-energy by
\[
\Es{s}{S}=\sum_{\{u,v\}\subseteq S} d(u,v)^{-s}.
\]
We also set $E_s(S)=0$ when $|S|\le 1$. The optimization problem asks for a $k$-subset minimizing $E_s(S)$.
\end{definition}

We will use the associated decision problem.

\medskip
\noindent
\textbf{RSSP-Decision:} Given $(X,d)$, a target size $k$, an exponent $s>0$, and a threshold $T$, decide whether there exists a subset $S\subseteq X$ with $|S|=k$ and $E_s(S)\le T$.

\medskip
We also use the classical minimum pairwise distance (MPD) objective
\[
\MM{S}:=\min_{\{u,v\}\subseteq S} d(u,v),
\]
with the convention that $\MM{S}=+\infty$ for $|S|\le 1$.

\begin{definition}[Ultrametric]
A metric $d$ on a set $X$ is called an \emph{ultrametric} if
\[
d(x,z)\le \max\{d(x,y),d(y,z)\}
\qquad\text{for all }x,y,z\in X.
\]
Equivalently, for every triple of points, the two largest pairwise distances are equal.
\end{definition}

Every finite ultrametric admits a rooted-tree representation: the points are the leaves, internal nodes carry nondecreasing heights, and the distance between two leaves is twice the height of their least common ancestor. Ultrametrics arise naturally from rooted phylogenetic trees, where distances may represent species separation in evolutionary time, and from the cophenetic distances of hierarchical clustering dendrograms, including hierarchical analyses of cultural vectors~\cite{HartmannSteel2006,StivalaEtAl2014}.

\section{General metric spaces via \texorpdfstring{$k$-Clique}{k-Clique}}

We start with the purely metric setting. The corresponding NP-hardness statement already appears in the comparative analysis of Pereverdieva et al.~\cite{PereverdievaEtAl2025}; we include a short self-contained proof because it serves as the clean baseline for the rest of the paper.

\begin{definition}[$k$-Clique]
Given a graph $G=(V,E)$ and an integer $k$, decide whether there exists a subset $V'\subseteq V$ of size $k$ such that every pair of distinct vertices in $V'$ is joined by an edge.
\end{definition}

\begin{lemma}\label{lem:metric}
Let $G=(V,E)$. Define $d:V\times V\to\{0,1,2\}$ by
\[
d(u,v)=\begin{cases}
0,&u=v,\\
2,&u\ne v\text{ and }\{u,v\}\in E,\\
1,&u\ne v\text{ and }\{u,v\}\notin E.
\end{cases}
\]
Then $(V,d)$ is a metric space.
\end{lemma}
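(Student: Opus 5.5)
We verify the metric axioms directly from the definition of $d$ in \cref{lem:metric}. Non-negativity and the identity of indiscernibles are immediate: $d(u,u)=0$, and for $u\ne v$ the value is either $1$ or $2$, hence strictly positive. Symmetry holds because the graph is undirected, so $\{u,v\}\in E$ if and only if $\{v,u\}\in E$; the case distinction is therefore the same for $(u,v)$ and $(v,u)$.

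The only step requiring an argument is the triangle inequality $d(x,z)\le d(x,y)+d(y,z)$. We split into cases according to how many of $x,y,z$ coincide. If $x=z$, the left side is $0$ and there is nothing to prove. If $x\ne z$ but $y\in\{x,z\}$, say $y=x$, the inequality reads $d(x,z)\le 0+d(x,z)$, which is trivial; the case $y=z$ is symmetric. In the remaining case $x,y,z$ are pairwise distinct. Then every distance involved lies in $\{1,2\}$, so
\[
d(x,z)\le 2 = 1+1 \le d(x,y)+d(y,z).
\]

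Consequently $(V,d)$ is a metric space. The key observation, and the only place where any care is needed, is that all nonzero distances lie in the interval $[1,2]$. Any two such values sum to at least $2$, which dominates any single value, so the triangle inequality cannot fail regardless of the edge structure of $G$. We also note for later use that $d$ takes only the values $0,1,2$. Its encoding size is therefore polynomial in $|V|$, which will matter when this metric is used in the reduction from $k$-Clique.
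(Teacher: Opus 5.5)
Your proof is correct and follows essentially the same route as the paper: every axiom except the triangle inequality is immediate, and when the three points are pairwise distinct all distances lie in $\{1,2\}$, so $d(x,z)\le 2\le d(x,y)+d(y,z)$. The paper organizes the case split around $d(u,v)=2$ being the only nontrivial case rather than around coincidences among $x,y,z$, but it is the same argument.
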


\begin{proof}
Non-negativity, symmetry, and $d(u,v)=0\Leftrightarrow u=v$ are immediate. For the triangle inequality, the only nontrivial case is $d(u,v)=2$. If $u\neq w\neq v$, then each of $d(u,w)$ and $d(w,v)$ belongs to $\{1,2\}$, so $2\le d(u,w)+d(w,v)$. All other cases are trivial.
\end{proof}

\begin{theorem}\label{thm:metric}
For every fixed $s>0$, RSSP-Decision is NP-hard for general finite metric spaces.
\end{theorem}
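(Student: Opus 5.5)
We reduce from $k$-Clique using the metric of Lemma~\ref{lem:metric}. Given an instance $(G,k)$ with $G=(V,E)$, we output the RSSP-Decision instance consisting of the finite metric space $(V,d)$, the same target size $k$, the fixed exponent $s$, and the threshold
\[
T:=\binom{k}{2}\,2^{-s}.
\]
The instance has size polynomial in $|V|$. The distance matrix has entries in $\{0,1,2\}$. The threshold $T$ is determined by $k$ and the fixed constant $s$. If one insists on a rational encoding of $T$ for non-integral $s$, the comparison can be reformulated as counting non-edges, as discussed below. So the construction is polynomial time, and correctness remains to be shown.

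The key step is an exact formula for the energy of any $k$-subset $S\subseteq V$. Let $m(S)$ be the number of non-adjacent pairs in $S$. Each edge pair contributes $2^{-s}$ and each non-edge pair contributes $1^{-s}=1$, so
\[
E_s(S)=\binom{k}{2}2^{-s}+m(S)\bigl(1-2^{-s}\bigr).
\]
Since $s>0$, we have $1-2^{-s}>0$. Hence $E_s(S)\le T$ holds if and only if $m(S)=0$, that is, if and only if $S$ is a clique of size $k$. Both directions of the reduction follow: a $k$-clique gives a subset of energy exactly $T$, and any $k$-subset with energy at most $T$ must be a clique.

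We do not expect a serious obstacle here. The only delicate point is the encoding of the real number $T=\binom{k}{2}2^{-s}$ when $s$ is irrational or non-integral. The cleanest way around it is to note that, since $s$ is fixed, the decision reduces to comparing the integer $m(S)$ with $0$. Equivalently, any threshold strictly between $\binom{k}{2}2^{-s}$ and $\binom{k}{2}2^{-s}+(1-2^{-s})$ works, and such a threshold can be chosen as a rational number with polynomially many bits. We also check that the case $k\le 1$ is trivial and consistent with the convention $E_s(S)=0$. Finally, NP-hardness transfers because $k$-Clique is NP-complete when $k$ is part of the input.
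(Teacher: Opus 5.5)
Your proposal is correct and is essentially the paper's proof: the same metric from Lemma~\ref{lem:metric}, the same threshold $T=\binom{k}{2}2^{-s}$, and the same reduction from $k$-Clique. Your exact identity $E_s(S)=\binom{k}{2}2^{-s}+m(S)\bigl(1-2^{-s}\bigr)$ simply makes explicit the strict-gap step that the paper states tersely.
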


\begin{proof}
Given $(G,k)$, build $(V,d)$ as in \cref{lem:metric}. For any $k$-subset $S$,
\[
\bin{k}\cdot 2^{-s}\le E_s(S)\le \bin{k}.
\]
Set $T:=\bin{k}\cdot 2^{-s}$. Then $G$ has a $k$-clique if and only if there exists a $k$-subset $S$ with all pairwise distances equal to $2$, which holds if and only if $E_s(S)=T\le T$. This is a polynomial reduction from $k$-Clique. The argument uses only the strict distance gap between $1$ and $2$, and therefore does not require $s\ge 1$. For fixed positive integer $s$, all numbers in this reduction are rational; for a fixed non-integer exponent, the constant $2^{-s}$ is part of the fixed exact-real problem specification.
\end{proof}

\begin{remark}
Theorem~\ref{thm:metric} is not the main novelty of the present paper. Its role is to anchor the hard side of the complexity landscape in a self-contained way, while the general-metric NP-hardness result itself already appears in the comparative analysis of Pereverdieva et al.~\cite{PereverdievaEtAl2025}.
\end{remark}

\section{Euclidean hardness in the plane when \texorpdfstring{$s$}{s} is part of the input}

We now restrict to Euclidean geometry and reduce geometric independent set to
RSSP using a global threshold based on the gap between forbidden and
admissible distances; see Figure~\ref{fig:dmin-dmax}.

\begin{definition}[Geometric Independent Set]
Given a finite point set $P\subset\mathbb{R}^2$, a threshold $\delta>0$, and an integer $k$, decide whether there exists a subset $S\subseteq P$ with $|S|=k$ such that $\|x-y\|\ge \delta$ for all distinct $x,y\in S$.
\end{definition}

This problem is NP-complete in the plane~\cite{ClarkColbournJohnson1990,AgarwalVanKreveldSuri1998,FowlerPatersonTanimoto1981}. For a finite set $X\subseteq P$, define
\[
(X\times X)_{\ge \delta}:=\{(x,y):x,y\in X,\ x\neq y,\ \|x-y\|\ge \delta\},
\]
\[
(X\times X)_{< \delta}:=\{(x,y):x,y\in X,\ x\neq y,\ \|x-y\|< \delta\}.
\]
We then set
\[
D_{\min}(X):=
\begin{cases}
\min\{\|x-y\|:(x,y)\in (X\times X)_{\ge\delta}\}, & \text{if }(X\times X)_{\ge\delta}\neq\emptyset,\\
+\infty, & \text{otherwise,}
\end{cases}
\]
and
\[
\delta_{\max}(X):=
\begin{cases}
\max\{\|x-y\|:(x,y)\in (X\times X)_{<\delta}\}, & \text{if }(X\times X)_{<\delta}\neq\emptyset,\\
-\infty, & \text{otherwise.}
\end{cases}
\]
For the ambient point set $P$, we abbreviate
\[
D_{\min}:=D_{\min}(P),\qquad \delta_{\max}:=\delta_{\max}(P).
\]
Figure~\ref{fig:dmin-dmax} illustrates these quantities. The key point is that
we need a strict gap between the largest forbidden distance $\delta_{\max}$
and the smallest admissible distance $D_{\min}$.

\begin{figure}[t]
  \centering
  \begin{minipage}[t]{0.62\linewidth}
    \centering
    \begin{tikzpicture}[scale=1.0]
      \draw[->] (0,0) -- (5,0) node[right] {$x$};
      \draw[->] (0,0) -- (0,5) node[above] {$y$};
      \draw (3,1.5) -- (3.4,1.1) node[midway, right] {$\delta_{\max}$};
      \draw (0.5,2.2) -- (1.8,2.1) node[midway, above] {$D_{\min}$};
      \draw[thick] (3,1.5) circle (0.6cm);
      \draw[thick,dotted] (3,1.5) circle (0.8cm);
      \draw[thick] (0.5,2.2) circle (1.3cm);
      \draw[thick,dotted] (0.5,2.2) circle (0.8cm);
      \fill (3,1.5) circle (2pt) node[below left] {$p_1$};
      \draw[->](3,1.5) -- (3,2.3) node[midway, right] {$\delta$};
      \fill (0.5,2.2) circle (2pt) node[below left] {$p_2$};
      \draw[->](0.5,2.2) -- (0.5,3.0) node[midway, right] {$\delta$};
      \fill (1.8,2.1) circle (2pt) node[below left] {$p_3$};
      \fill (0.5,3.9) circle (2pt) node[below left] {$p_4$};
      \fill (3.4,1.1) circle (2pt) node[below left] {$p_5$};
      \fill (2.8,4.5) circle (2pt) node[below left] {$p_6$};
      \fill (4.9,3.5) circle (2pt) node[below left] {$p_7$};
      \fill (2.8,4.15) circle (2pt) node[below left] {$p_8$};
    \end{tikzpicture}
    \caption{Illustration of $\delta_{\max}$ and $D_{\min}$ used in the GIS-to-RSSP reduction. The dotted circles indicate the threshold radius $\delta$ around $p_1$ and $p_2$. Here $\delta_{\max}$ is realized by the forbidden pair of maximum radius $(p_1,p_5)$ lying just above the threshold, while $D_{\min}$ is realized by the admissible pair with minimum radius $(p_2,p_3)$ lying at or above the threshold. Choosing $s$ so that $\binom{k}{2}D_{\min}^{-s}<\delta_{\max}^{-s}$ ensures that one forbidden pair already contributes more energy than all admissible pairs together.}
    \label{fig:dmin-dmax}
  \end{minipage}
\end{figure}

\begin{lemma}\label{lem:dmin-dmax-monotone}
Let $\emptyset\neq S\subseteq P$. Then
\[
\delta_{\max}(S)\le \delta_{\max}(P)<\delta\le D_{\min}(P)\le D_{\min}(S).
\]
In particular, if both forbidden and admissible pairs occur in $P$, then
\[
\delta_{\max}<D_{\min}.
\]
\end{lemma}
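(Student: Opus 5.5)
The chain consists of four inequalities, and I will verify each one directly from the definitions of $D_{\min}(\cdot)$ and $\delta_{\max}(\cdot)$. The two outer inequalities follow from monotonicity of the index sets. The two inner ones follow from how the pairs are split at $\delta$.

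\emph{Monotonicity.} Since $S\subseteq P$, we have
\[
(S\times S)_{<\delta}\subseteq (P\times P)_{<\delta}
\quad\text{and}\quad
(S\times S)_{\ge\delta}\subseteq (P\times P)_{\ge\delta}.
\]
A maximum over a subset is at most the maximum over the superset. The convention $\max\emptyset=-\infty$ keeps this true when the smaller set is empty. Hence $\delta_{\max}(S)\le\delta_{\max}(P)$. Dually, a minimum over a subset is at least the minimum over the superset, with $\min\emptyset=+\infty$. Hence $D_{\min}(P)\le D_{\min}(S)$. I will check the empty-set cases explicitly:
\begin{itemize}
\item if $(P\times P)_{<\delta}=\emptyset$, then $(S\times S)_{<\delta}=\emptyset$ as well, and both sides equal $-\infty$;
\item if only the $S$-set is empty, the left side is $-\infty$ and the inequality holds trivially.
\end{itemize}
The same reasoning handles $D_{\min}$ with $+\infty$.

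\emph{Threshold inequalities.} If $(P\times P)_{<\delta}\neq\emptyset$, then $\delta_{\max}(P)$ is a maximum of finitely many values, each strictly less than $\delta$. Because $P$ is finite, the maximum is attained, so $\delta_{\max}(P)<\delta$ strictly. If the set is empty, then $-\infty<\delta$. Similarly, $D_{\min}(P)$ is either a minimum of finitely many values each $\ge\delta$, and therefore $\ge\delta$, or it equals $+\infty\ge\delta$.

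Concatenating these four inequalities gives the chain. For the ``in particular'' claim, assume both pair sets in $P$ are nonempty. Then $\delta_{\max}$ and $D_{\min}$ are finite reals, and the middle inequalities give $\delta_{\max}<\delta\le D_{\min}$.

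The only step needing care is the strictness of $\delta_{\max}(P)<\delta$, and finiteness of $P$ settles it: a supremum of distances below $\delta$ over an infinite set could equal $\delta$, but here the maximum is attained. Beyond that, the argument requires only bookkeeping of the $\pm\infty$ conventions.
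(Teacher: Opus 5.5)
Your proof is correct and matches the paper's argument: the outer inequalities follow from the inclusions $(S\times S)_{<\delta}\subseteq (P\times P)_{<\delta}$ and $(S\times S)_{\ge\delta}\subseteq (P\times P)_{\ge\delta}$ together with the $\pm\infty$ conventions, and the middle ones follow from the split at $\delta$, with finiteness of $P$ making $\delta_{\max}(P)<\delta$ strict. You spell out the empty-set cases in more detail than the paper does, but the approach is the same.
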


\begin{proof}
The inclusions
\[
(S\times S)_{<\delta}\subseteq (P\times P)_{<\delta},
\qquad
(S\times S)_{\ge\delta}\subseteq (P\times P)_{\ge\delta}
\]
immediately imply
\[
\delta_{\max}(S)\le \delta_{\max}(P),
\qquad
D_{\min}(P)\le D_{\min}(S),
\]
with the stated conventions when one of the sets is empty. By definition,
every element of $(P\times P)_{<\delta}$ has distance strictly smaller than
$\delta$, and every element of $(P\times P)_{\ge\delta}$ has distance at least
$\delta$. Hence, because $P$ is finite, 
\[
\delta_{\max}(P)<\delta\le D_{\min}(P).
\]
Combining the inequalities yields the claim.
\end{proof}

\begin{lemma}[Auxiliary exponent choice]\label{lem:aux-exp-choice}
Let $c>0$ and let $0<a<b$. Then the inequality
\[
cb^{-u}<a^{-u}
\]
holds precisely for
\[
u>\frac{\log(c)}{\log(b)-\log(a)}.
\]
In particular, if $u_0$ is any real number with
\[
u_0>\frac{\log(c)}{\log(b)-\log(a)},
\]
then $cb^{-u}<a^{-u}$ for every $u\ge u_0$.
\end{lemma}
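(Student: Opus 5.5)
The plan is to take logarithms and turn the inequality into a linear inequality in $u$. All quantities $c$, $a^{-u}$, $b^{-u}$ are strictly positive, and $\log$ is strictly increasing on $(0,\infty)$. Therefore $cb^{-u}<a^{-u}$ is equivalent to
\[
\log(c)-u\log(b)<-u\log(a),
\]
that is, to
\[
\log(c)<u\bigl(\log(b)-\log(a)\bigr).
\]

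Since $0<a<b$, the factor $\log(b)-\log(a)$ is strictly positive. Dividing by it preserves the direction of the inequality, which gives exactly
\[
u>\frac{\log(c)}{\log(b)-\log(a)}.
\]
Each step is an equivalence, so this proves the ``precisely'' part: the inequality holds if and only if $u$ exceeds this threshold. This works for any sign of $\log(c)$, so no case split on $c\ge 1$ versus $c<1$ is needed.

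For the ``in particular'' clause, take any $u\ge u_0$. Then $u\ge u_0>\log(c)/(\log(b)-\log(a))$, so $u$ satisfies the characterization and hence $cb^{-u}<a^{-u}$. Equivalently, the map $u\mapsto a^{-u}/b^{-u}=(b/a)^u$ is strictly increasing because $b/a>1$, so once the inequality holds it continues to hold for larger $u$.

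There is no real obstacle here. The only points needing care are that every equivalence step applies a strictly monotone map to positive quantities, and that the sign of $\log(b)-\log(a)$ is positive. Later, this lemma will be applied with $c=\binom{k}{2}$, $a=\delta_{\max}$, $b=D_{\min}$, using \cref{lem:dmin-dmax-monotone} to guarantee $a<b$.
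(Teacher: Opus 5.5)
Your proposal is correct and matches the paper's proof: take logarithms to get $\log(c)<u(\log(b)-\log(a))$, then divide by the positive quantity $\log(b)-\log(a)$, with the ``in particular'' clause following immediately. Your remarks that every step is an equivalence and that no case split on the sign of $\log(c)$ is needed are accurate.
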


\begin{proof}
Since $0<a<b$, we have $\log(b)-\log(a)>0$. Taking logarithms, the inequality
\[
cb^{-u}<a^{-u}
\]
is equivalent to
\[
\log(c)<u(\log(b)-\log(a)),
\]
which is exactly the stated condition. The second claim is immediate.
\end{proof}

Before stating the main NP-hardness theorem of this section, we briefly outline the coding conventions for the Euclidean case:

\begin{remark}[Encoding convention for the Euclidean reduction]
For the following reduction we use the usual exact encoding for Euclidean instances in computational geometry. The input point coordinates and the threshold $\delta$ of the geometric independent-set instance are rational numbers encoded in binary. Squared Euclidean distances are therefore rational, and distances such as $D_{\min}$ and $\delta_{\max}$ are represented exactly by the corresponding realizing pairs and the square roots of their rational squared distances. In the constructed RSSP instance, the exponent $s$ is a positive integer encoded in binary, and the threshold is represented by the straight-line expression
\(
T=\binom{k}{2}D_{\min}^{-s}.
\).
Exponentiation is stored succinctly, for instance by repeated squaring, rather than by expanding numerators and denominators. If $L$ is the bit length of the geometric input, standard separation bounds for distinct rational squared distances imply that the integer $s$ chosen below has polynomial bit length in $L$. Thus the constructed RSSP instance has polynomial encoding length. No decimal approximation of $D_{\min}$, $\delta_{\max}$, $s$, or $T$ is used in the reduction.
\end{remark}

We can now state the central theorem for the Euclidean plane:
\begin{theorem}\label{thm:gis}
RSSP-Decision in the Euclidean plane is NP-hard when $s$ is part of the input.
\end{theorem}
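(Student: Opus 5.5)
We reduce from Geometric Independent Set. Given an instance $(P,\delta,k)$, we output the RSSP-Decision instance $(P,\|\cdot\|,k,s,T)$ with the same point set, the same $k$, the threshold $T=\binom{k}{2}D_{\min}^{-s}$, and an integer exponent $s$ chosen via \cref{lem:aux-exp-choice}.

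\textbf{Degenerate cases.} We first dispose of them in polynomial time by outputting a fixed yes- or no-instance.
\begin{itemize}[leftmargin=1.5em]
\item If $k\le 1$, the instance is trivially a yes-instance.
\item If $P$ has no forbidden pairs, every $k$-subset is independent, so the answer is yes exactly when $|P|\ge k$.
\item If $P$ has no admissible pairs and $k\ge 2$, the answer is no.
\end{itemize}
Otherwise \cref{lem:dmin-dmax-monotone} gives $0<\delta_{\max}<D_{\min}<\infty$. We apply \cref{lem:aux-exp-choice} with $c=\binom{k}{2}$, $a=\delta_{\max}$ and $b=D_{\min}$, and take $s$ to be the smallest integer exceeding $\log\binom{k}{2}/(\log D_{\min}-\log\delta_{\max})$. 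This guarantees
\[
\binom{k}{2}D_{\min}^{-s}<\delta_{\max}^{-s}.
\]

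\textbf{Correctness.} Suppose $S$ is a $k$-subset that is independent. Every pair in $S$ has distance at least $D_{\min}(S)\ge D_{\min}$, so $E_s(S)\le\binom{k}{2}D_{\min}^{-s}=T$.

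Conversely, suppose $S$ contains a forbidden pair $\{x,y\}$. Then $\|x-y\|\le\delta_{\max}(S)\le\delta_{\max}$ by \cref{lem:dmin-dmax-monotone}. Since all other terms of $E_s(S)$ are positive,
\[
E_s(S)\ge\delta_{\max}^{-s}>T.
\]
Hence a $k$-subset with $E_s(S)\le T$ exists if and only if the GIS instance is a yes-instance. Because GIS is NP-complete in the plane, NP-hardness follows once the reduction is shown to be polynomial.

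\textbf{Main obstacle: encoding size.} We need $s$ to have polynomial bit length. Equivalently, the ratio $D_{\min}/\delta_{\max}$ must not be too close to $1$ relative to the input size $L$. Write $q_1=\delta_{\max}^2<q_2=D_{\min}^2$. Both are distinct rationals whose numerators and denominators are at most $2^{O(L)}$. Their difference is then at least $2^{-O(L)}$, and $q_2\le 2^{O(L)}$. It follows that
\[
\log D_{\min}-\log\delta_{\max}=\tfrac12\log(q_2/q_1)\ge\tfrac12\log\!\bigl(1+(q_2-q_1)/q_2\bigr)\ge 2^{-O(L)}.
\]
Together with $\log\binom{k}{2}=O(\log n)$, this bounds $s$ by $2^{O(L)}$, so $s$ has $O(L)$ bits when written in binary.

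The remaining data are computable in polynomial time. We find $D_{\min}$ and $\delta_{\max}$ by comparing the $O(n^2)$ rational squared distances against $\delta^2$. We compute the integer $s$ exactly, using rational bounds on the logarithms, or alternatively any integer up to a polynomial-bit upper bound. We store $T$ succinctly as in the encoding remark.

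If the model demands explicit rational numbers for $T$, this is the delicate point. The fallback is to replace $T$ by a rational lying strictly between $\binom{k}{2}D_{\min}^{-s}$ and $\delta_{\max}^{-s}$. We would raise $s$ by one so that this gap becomes a constant factor, which allows a representation by a floating-point number with a polynomial-size exponent.
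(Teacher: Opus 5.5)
Your proposal is correct and follows the paper's proof essentially line by line. It uses the same reduction from Geometric Independent Set with $T=\binom{k}{2}D_{\min}^{-s}$, an integer $s$ chosen via \cref{lem:aux-exp-choice} so that a single forbidden pair already exceeds $T$, and the same two-sided energy comparison; your explicit bound $\log(D_{\min}/\delta_{\max})\ge 2^{-O(L)}$ fills in the separation argument that the paper's encoding remark only asserts. One small caveat concerns your optional fallback: raising $s$ by one only gains a factor $D_{\min}/\delta_{\max}$, which may be as small as $1+2^{-O(L)}$ rather than a constant, though a relative gap of that size still suffices for a floating-point $T$ with $O(L)$-bit mantissa and exponent.
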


\begin{proof}
Let $(P,\delta,k)$ be an instance of geometric independent set. If either
$(P\times P)_{<\delta}=\emptyset$ or $(P\times P)_{\ge\delta}=\emptyset$, then
the instance is trivial and can be handled directly. We therefore assume that
both forbidden and admissible pairs occur. By
Lemma~\ref{lem:dmin-dmax-monotone}, we have $\delta_{\max}<D_{\min}$.

By comparing squared distances, compute in polynomial time the realizing pairs for $\delta_{\max}$ and $D_{\min}$, and represent these distances exactly as described above. Choose the positive integer
\[
s:=1+\left\lceil\frac{\log \bin{k}}{\log(D_{\min}/\delta_{\max})}\right\rceil,
\qquad
T:=\bin{k}D_{\min}^{-s}.
\]
The exponent $s$ is encoded in binary and $T$ is encoded by the displayed expression. By Lemma~\ref{lem:aux-exp-choice} with $c=\bin{k}$, $a=\delta_{\max}$, and $b=D_{\min}$, this choice guarantees
\[
\bin{k}D_{\min}^{-s}<\delta_{\max}^{-s}.
\]
We claim that for every $k$-subset $S\subseteq P$:
\begin{enumerate}
\item if $S$ is $\delta$-independent, then $E_s(S)\le T$;
\item if $S$ is not $\delta$-independent, then $E_s(S)>T$.
\end{enumerate}

If $S$ is $\delta$-independent, then every selected pair lies in
$(S\times S)_{\ge\delta}$, so every selected distance is at least
$D_{\min}(S)\ge D_{\min}$ by Lemma~\ref{lem:dmin-dmax-monotone}. Hence
\[
E_s(S)\le \bin{k}D_{\min}^{-s}=T.
\]
Conversely, if $S$ is not $\delta$-independent, then some selected pair has
distance strictly smaller than $\delta$, and therefore at most
$\delta_{\max}(S)\le \delta_{\max}$ by
Lemma~\ref{lem:dmin-dmax-monotone}. Hence
\[
E_s(S)\ge \delta_{\max}^{-s}.
\]
Therefore $T<\delta_{\max}^{-s}$. Hence every non-independent $k$-subset has
energy strictly larger than $T$.

Thus the GIS instance is a YES-instance if and only if the constructed RSSP
instance is a YES-instance.
\end{proof}

\begin{remark}
The essential point in Theorem~\ref{thm:gis} is the global threshold: the
ambient gap between $\delta_{\max}$ and $D_{\min}$, illustrated in
Figure~\ref{fig:dmin-dmax}, is converted into an energy separation by choosing
$s$ large enough that a single forbidden distance dominates the total
contribution of all admissible pairs. This is also the mechanism by which the
reduction points toward the MPD limit.
\end{remark}

\begin{remark}
Since $\mathbb{R}^2$ embeds isometrically into $\mathbb{R}^d$ for every fixed
$d\ge 2$ via $(x_1,x_2)\mapsto (x_1,x_2,0,\dots,0)$, Theorem~\ref{thm:gis}
immediately implies that RSSP-Decision is NP-hard in every fixed Euclidean
dimension $d\ge 2$ when $s$ is part of the input.
\end{remark}

We close this section by discussing three obstructions that arise when one tries to remove $s$ from the input. These show that several standard routes are insightful, but ultimately do not work.

\begin{remark}
We discuss a direct reduction to \textsc{3SAT} via Fowler's gadget
\cite{FowlerPatersonTanimoto1981} in Appendix~B. This route is instructive
because it shows that one need not use a completely naive separation argument
in which a single forbidden pair must dominate the total contribution of all
$\binom{k}{2}$ admissible-pair contributions in a $\delta$-independent
$k$-subset. Owing to the constant geometric gap in that construction --- for
instance, a separation of at least $2/3$ between below-threshold and
above-threshold distances --- the exponent $s$ can be chosen to grow only
linearly with $k$, rather than proportionally to $\binom{k}{2}$. Thus the
construction shows that the exponent need not compensate for the full
quadratic number of pair contributions in an admissible subset. However, this
still does not yield a fixed-$s$ hardness result. Although the required
exponent grows only linearly with $k$, it remains tied to the subset size and
hence to the input instance. Therefore this route improves substantially over
the naive quadratic estimate, but still requires choosing $s$ as part of the
construction rather than keeping it uniformly constant.
\end{remark}

\begin{remark}
One might want to use a scaled distance matrix to achieve separation between
admissible and non-admissible cases for fixed $s$. The main obstruction is the
homogeneity of the Riesz kernel: \(E_s(\lambda X)=\lambda^{-s}E_s(X)\).
Thus, if all relevant distances are scaled by the same factor \(\lambda\),
then the energies of admissible and non-admissible subsets are both
multiplied by the same factor \(\lambda^{-s}\). Hence uniform scaling does
not improve their relative separation. In particular, a distance pattern
\((\lambda,2\lambda)\) yields only the constant factor \(2^s\) between one
forbidden pair and one admissible pair. More precisely, if in an admissible
subset all pairs have distance at least \(2\lambda\), whereas a non-admissible
subset contains at least one pair at distance at most \(\lambda\), then one
forbidden pair contributes at least \(\lambda^{-s}\), while an admissible
\(k\)-subset may still have total energy as large as
\(\binom{k}{2}(2\lambda)^{-s}\). Thus one would need
\(2^s>\binom{k}{2}\)
for a single forbidden pair to dominate the full energy of every admissible
\(k\)-subset. For fixed \(s\), this fails once \(k\) is sufficiently large.
Therefore such a simple \((\lambda,2\lambda)\) scaling cannot yield the
required separation for fixed \(s\).
\end{remark}

\begin{remark}[Heuristic obstruction to broadcast amplification]
\label{rem:broadcast-obstruction}
A plausible route to fixed-$s$ hardness would be to replicate ``check''
gadgets in additional geometric layers, with the aim that one incompatible
logical pair should trigger many short distances. However, if the replicated
check points are fixed in the instance and are not themselves selected, then
their contribution to the Riesz $s$-energy factorizes. Indeed, if the selected
representatives are $x_u$ and $y_v$ and $Z$ is any fixed auxiliary set, then
\[
E(u,v)=\sum_{z\in Z}\|x_u-z\|^{-s}
+\sum_{z\in Z}\|y_v-z\|^{-s}
+\|x_u-y_v\|^{-s}
+\mathrm{const}.
\]
Hence, the auxiliary layers contribute only unary terms depending on $u$ and
$v$ separately; they do not create new pair-specific penalties beyond the
direct interaction $\|x_u-y_v\|^{-s}$. Therefore, merely stacking fixed check
layers in a third dimension does not by itself amplify a single bad logical
pair into many independent bad terms. Any successful fixed-$s$ construction
would need an additional mechanism that forces the co-selection of companion
points, or else must encode the amplification directly in the distances among
the selected points themselves.
\end{remark}

Taken together, these three obstructions indicate that eliminating \(s\) from the input is a genuinely nontrivial challenge: while each route is conceptually informative, none presently leads to a fixed-\(s\) hardness construction in Euclidean space.

\section{Ultrametrics and exact dynamic programming}

Ultrametrics are metrics induced by rooted hierarchies. Every finite ultrametric space admits a rooted-tree representation: the points are the leaves, internal nodes carry nondecreasing heights, and the distance between two leaves is determined by the height of their least common ancestor. Such spaces arise naturally from rooted phylogenetic trees in biodiversity applications, where distances encode species separation in evolutionary time, and from the cophenetic distances of hierarchical clustering dendrograms, including hierarchical analyses of cultural vectors~\cite{HartmannSteel2006,StivalaEtAl2014}. They therefore form a natural structured class in which one may ask whether minimum Riesz $s$-energy subset selection remains hard or becomes tractable.

A recent exact algorithmic contribution is the pruning algorithm of
\cite{MinhKlaereHaeseler2006}, which computes a size-$k$ subset of maximum
phylogenetic diversity (PD) on a phylogenetic tree. In the specifically
ultrametric setting, exact algorithmic results appear to be comparatively sparse;
among the verified references included here, \cite{MansonSempleSteel2022}
gives polynomial-time methods for counting all size-$k$ maximum-PD sets on rooted
ultrametric trees and for optimizing a linear score over those maximum-PD sets.
A broader exact ultrametric framework is provided by
\cite{ArenasMerklPichlerRiveros2024}, who prove polynomial-time solvability for
selecting a size-$k$ subset maximizing any weakly subset-monotone diversity
function extending an ultrametric; this applies in particular to phylogenetic
diversity in the tree setting, as well as to the pairwise-sum diversity
$\delta_{\mathrm{sum}}$, the minimum-pairwise-distance diversity
$\delta_{\min}$, and Weitzman diversity $\delta_W$.

Also for Riesz $s$-energy subset selection, we also show that finite ultrametric spaces form an exact, tractable island. The key structural fact is that, at any internal node of the rooted tree, all cross-subtree distances are equal. Hence the full interaction between two child subproblems collapses to a single cardinality-dependent cross term.

\begin{theorem}[Ultrametric recurrence]\label{thm:ultra}
Let $F_u(t)$ denote the minimum Riesz $s$-energy among all subsets of size $t$ chosen from the leaves below node $u$ in a rooted binary ultrametric tree. Then for a leaf $\ell$,
\[
F_\ell(0)=0,\qquad F_\ell(1)=0,
\]
and for an internal binary node $u$ with children $v$ and $w$,
\[
F_u(t)=\min_{t_v+t_w=t}\bigl(F_v(t_v)+F_w(t_w)+t_v t_w\,\Delta_u^{-s}\bigr),
\]
where $\Delta_u$ is the common distance between any leaf in the subtree of $v$ and any leaf in the subtree of $w$.
\end{theorem}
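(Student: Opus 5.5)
The plan is to prove the recurrence by a standard decomposition argument, establishing the two inequalities ``$\le$'' and ``$\ge$'' separately. Structural induction on the tree will make $F_v$ and $F_w$ already correct when we treat $u$. The base case is immediate: a leaf subtree contains exactly one point. Every subset of size $0$ or $1$ has energy $0$ by the convention $E_s(S)=0$ for $|S|\le 1$, and sizes $t\ge 2$ are infeasible. We read these as $F_\ell(t)=+\infty$, and likewise $F_u(t)=+\infty$ whenever $t$ exceeds the number of leaves below $u$. With this convention the minimum over $t_v+t_w=t$ automatically discards infeasible splits.

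The key structural step is an energy decomposition. Let $L(x)$ denote the leaf set below a node $x$. For an internal node $u$ with children $v,w$, the set $L(u)$ is the disjoint union of $L(v)$ and $L(w)$. For any $p\in L(v)$ and $q\in L(w)$, the least common ancestor of $p$ and $q$ is $u$, so the tree representation of the ultrametric (distance equals twice the height of the lca) gives $d(p,q)=\Delta_u$. This $\Delta_u$ is positive, since distinct leaves are at positive distance. Now take any $S\subseteq L(u)$ and set $S_v=S\cap L(v)$ and $S_w=S\cap L(w)$. Split the unordered pairs of $S$ into pairs inside $S_v$, pairs inside $S_w$, and cross pairs. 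This yields
\[
E_s(S)=E_s(S_v)+E_s(S_w)+|S_v|\,|S_w|\,\Delta_u^{-s},
\]
and the identity remains valid when $|S_v|$ or $|S_w|$ is at most $1$, by the zero-energy convention.

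For ``$\ge$'', take an optimal $S$ of size $t$ and put $t_v=|S_v|$, $t_w=|S_w|$. The decomposition gives $E_s(S)\ge F_v(t_v)+F_w(t_w)+t_vt_w\Delta_u^{-s}$, which is at least the right-hand side. For ``$\le$'', fix any feasible split $t_v+t_w=t$. Choose optimal $S_v^*\subseteq L(v)$ and $S_w^*\subseteq L(w)$ of those sizes, and form their union. It has size $t$, since the leaf sets are disjoint, and by the decomposition its energy equals $F_v(t_v)+F_w(t_w)+t_vt_w\Delta_u^{-s}$. Hence $F_u(t)$ is at most every term of the minimum. The crux is that the cross term depends on $S_v$ and $S_w$ only through their cardinalities. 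This is exactly where the ultrametric property is used: it makes the problem separable, so optimal sub-solutions can be combined independently.

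I expect the only real obstacle to be justifying carefully that all cross-subtree distances equal a single value $\Delta_u$. I will derive this from the rooted-tree representation stated in the preliminaries, rather than directly from the three-point condition. I would also remark that if the representation has non-binary nodes, it can be binarized by inserting internal nodes of equal height. That preserves the lca heights, so the binary assumption in the statement loses no generality.
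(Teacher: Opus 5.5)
Your proposal is correct and follows the paper's proof. The paper's argument is exactly the decomposition $E_s(S)=E_s(A)+E_s(B)+|A|\,|B|\,\Delta_u^{-s}$ with $A=S\cap L(v)$ and $B=S\cap L(w)$, followed by minimizing over all splits $t=|A|+|B|$; you are more careful than the paper in four places:
\begin{itemize}
\item you spell out the two inequalities;
\item you set $F(t)=+\infty$ for infeasible cardinalities, which the paper and its Appendix~A algorithm leave implicit;
\item you derive the common cross distance from the lca representation;
\item you add the binarization remark, which matters because the Appendix~A worked example has a non-binary root.
\end{itemize}
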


\begin{corollary}\label{cor:ultra-complexity}
On a rooted binary ultrametric tree with $n$ leaves, minimum Riesz $s$-energy subset selection can be solved exactly in $O(nk^2)$ time and $O(nk)$ space.
\end{corollary}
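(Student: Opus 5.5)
The plan is to evaluate the recurrence of \cref{thm:ultra} bottom-up over the tree, storing each table $F_u(\cdot)$ only up to index $k$. The one nontrivial point is bounding the total number of merge operations by $O(nk^2)$ rather than the naive $O(n\cdot k\cdot k)$ with unbounded subtree sizes.

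\emph{Setup.} A rooted binary tree with $n$ leaves has $n-1$ internal nodes. For each node $u$, let $n_u$ denote the number of leaves below $u$. We store the table $F_u(t)$ for $0\le t\le \min(n_u,k)$; sizes $t>n_u$ are infeasible and may be regarded as $+\infty$. Leaves are initialized by $F_\ell(0)=F_\ell(1)=0$. Processing the nodes in post-order, the table at an internal node $u$ with children $v,w$ is filled by the recurrence, restricted to $t_v\le\min(n_v,k)$, $t_w\le \min(n_w,k)$ and $t_v+t_w\le k$. The common cross distance $\Delta_u$ is read off the tree; it equals twice the height of $u$. The value $\Delta_u^{-s}$ is computed once per node, so each candidate in the minimum costs $O(1)$ arithmetic operations. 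The answer is $F_{\mathrm{root}}(k)$. An optimal subset is recovered by storing the minimizing split $(t_v,t_w)$ at every entry and tracing back from the root in $O(n)$ time. Correctness follows directly from \cref{thm:ultra} by induction over the post-order.

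\emph{Time bound.} The work at node $u$ is $O\bigl((\min(n_v,k)+1)(\min(n_w,k)+1)\bigr)$. The crude bound $(k+1)^2$ per node already gives $O(nk^2)$ in total, because there are $n-1$ internal nodes. This suffices for the stated claim, so no sharper amortized argument is needed; the tighter $O(nk)$ small-to-large bound could be mentioned as a remark. The only care needed is that the cap at $k$ is enforced, so that tables never exceed length $k+1$.

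\emph{Space bound.} Each of the $2n-1$ nodes stores a table of length at most $k+1$, together with the split pointers for traceback. This gives $O(nk)$ space. The only step I expect to require attention is the cost model. The $O(nk^2)$ count is in arithmetic operations on real numbers, under the same exact-real convention for $\Delta_u^{-s}$ adopted in the earlier sections. This should be stated explicitly.
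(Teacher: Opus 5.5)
Your proposal is correct and takes the same approach as the paper. Both evaluate the recurrence of Theorem~\ref{thm:ultra} bottom-up, charge $O(k^2)$ per internal node for the knapsack-style convolution over splits capped at $k$, sum over the $n-1$ internal nodes, and store one table of length $O(k)$ per node. Your extra details (capping at $\min(n_u,k)$, traceback pointers for recovering an optimal subset, and the explicit arithmetic cost model) are sound refinements that the paper leaves implicit.
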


\begin{remark}
Theorems~\ref{thm:ultra} and Corollary~\ref{cor:ultra-complexity} identify finite ultrametric spaces as the main exact tractable regime isolated in this paper. A fuller derivation of the recurrence, an explicit dynamic-programming algorithm, a biological worked example, and a brief numerical validation against brute force are deferred to Appendix~\ref{app:ultra}.
\end{remark}

\section{The large-\texorpdfstring{$s$}{s} limit and minimum pairwise distance (MPD)}

For the comparison with minimum pairwise distance (MPD), we recall the following known finite-instance observation; see, for example, Atta~\cite{Atta2026}.

\begin{proposition}[Eventual minimum pairwise distance equivalence]\label{thm:limit}
Let $(X,d)$ be a finite metric space and fix $k$. Then there exists $s_0>0$ such that for every $s\ge s_0$, every minimizer of $E_s$ over all $k$-subsets of $X$ is also a maximizer of the MPD objective $\MM{S}$.
\end{proposition}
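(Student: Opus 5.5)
The argument is a finite-gap comparison in the spirit of Lemma~\ref{lem:aux-exp-choice}: once $s$ is large, a single pair at a smaller distance outweighs all $\binom{k}{2}$ pairs at larger distances. Assume $2\le k\le |X|$; if $k\le 1$, every $k$-subset has energy $0$ and MPD $+\infty$, so the claim is trivial. Let $\mathcal{S}_k$ be the finite family of $k$-subsets. Let $m^*:=\max_{S\in\mathcal{S}_k}\MM{S}$, which is attained because $\mathcal{S}_k$ is finite. Consider the finite set $D$ of values $\MM{S}$ for $S\in\mathcal{S}_k$. If every $k$-subset has MPD equal to $m^*$, there is nothing to prove and any $s_0>0$ works. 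Otherwise, let $m':=\max\{\MM{S}: S\in\mathcal{S}_k,\ \MM{S}<m^*\}$, so that $0<m'<m^*$.

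First, fix a maximizer $S^*$ of MPD and bound its energy from above. Every pair in $S^*$ has distance at least $m^*$, so $E_s(S^*)\le \binom{k}{2}(m^*)^{-s}$. Second, bound from below the energy of any $k$-subset $S$ that is not an MPD maximizer. Such an $S$ has $\MM{S}\le m'$, so some pair contributes at least $(m')^{-s}$. Since all terms are positive, $E_s(S)\ge (m')^{-s}$.

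Third, choose $s_0$ using Lemma~\ref{lem:aux-exp-choice} with $c=\binom{k}{2}$, $a=m'$ and $b=m^*$. Take any $s_0>\max\bigl\{0,\ \log\binom{k}{2}/(\log m^*-\log m')\bigr\}$. Then for every $s\ge s_0$,
\[
E_s(S^*)\le \binom{k}{2}(m^*)^{-s}<(m')^{-s}\le E_s(S)
\]
for every non-maximizer $S$. Hence no non-maximizer can minimize $E_s$. Since a minimizer exists by finiteness, every minimizer has $\MM{S}=m^*$, which proves the proposition.

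The only delicate step is the definition of the gap $m'<m^*$. I would phrase it through the finite set of attained MPD values rather than through all pairwise distances, so that the ``next best'' value is well defined and strictly smaller. The degenerate case where all $k$-subsets tie in MPD must be handled separately. The strict inequality from Lemma~\ref{lem:aux-exp-choice} (requiring $s_0$ strictly above the threshold, which is also why $\max$ with $0$ is taken) is what guarantees that \emph{every} minimizer, not just some minimizer, is an MPD maximizer.
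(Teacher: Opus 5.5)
Your proposal is correct and follows essentially the same argument as the paper. It bounds an MPD-optimal subset's energy above by $\binom{k}{2}(m^*)^{-s}$, bounds every non-optimal subset below by $(m')^{-s}$ using the next-best attained MPD value, and takes $s$ large, here via the auxiliary exponent lemma. Your separate handling of the degenerate cases ($k\le 1$, and the case where all $k$-subsets tie in MPD so that $m'$ is undefined) is a small improvement: the paper's proof tacitly assumes the set defining its $R$ is nonempty.
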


\begin{proof}
We use a similar reasoning than in the proof of theorem \ref{thm:gis}.
Let $\mathcal{S}$ be the finite family of all $k$-subsets of $X$, and let
\[
D^*:=\max\{\MM{S}:S\in\mathcal{S}\}.
\]
Choose any $T\in\mathcal{S}$ with $\MM{T}=D^*$. Then
\[
E_s(T)\le \bin{k}(D^*)^{-s}.
\]
Now let
\[
R:=\max\{\MM{S}:S\in\mathcal{S},\ \MM{S}<D^*\}.
\]
Because $\mathcal{S}$ is finite, we have $R<D^*$. For every $S\in\mathcal{S}$ with $\MM{S}<D^*$, some selected pair has distance at most $R$, hence
\[
E_s(S)\ge R^{-s}.
\]
Choose $s_0$ large enough that
\[
R^{-s}>\bin{k}(D^*)^{-s}
\qquad\text{for all }s\ge s_0.
\]
Then for all such $s$, every subset with $\MM{S}<D^*$ has strictly larger Riesz energy than $T$. Therefore every minimizer of $E_s$ must satisfy $\MM{S}=D^*$, that is, it must be MPD optimal.
\end{proof}

\begin{remark}
This known finite-instance observation is central to our comparison with minimum pairwise distance (MPD). It is interpretive rather than a new uniform complexity result: it explains why many large-$s$ phenomena naturally converge to the classical MPD picture.
\end{remark}

\section{Open problems and boundary of the tractable regime}

The preceding theorems draw a sharp boundary, but not a complete classification. The remaining open cases are structurally meaningful rather than incidental.

\paragraph{Fixed-exponent Euclidean hardness.}
Our Euclidean hardness proof uses $s$ as part of the input. Extending NP-hardness to every fixed exponent $s>0$ in low-dimensional Euclidean spaces remains open. This is arguably the most natural missing theorem on the hard side. Appendix~\ref{app:fowler} records a Fowler-style fixed-$s$ route and explains precisely why the present admissible-versus-forbidden threshold estimates do not yet close such a reduction.

\paragraph{One-dimensional and ordered Euclidean inputs.}
The line (and also $\ell_1 ordered staircases$, see \cite{Emmerich2026staircases}) appears significantly more structured than arbitrary Euclidean instances, but one should be careful not to overinterpret this. The contrast with minimum pairwise distance (MPD) is already sharp in low Euclidean dimension. For the MPD objective on sorted points $x_1<\dots<x_n$, an exact Bellman-style recurrence can be written in terms of the last chosen point:
\[
M(i,t)=\max_{j<i}\min\{M(j,t-1),\,x_i-x_j\},
\]
where $M(i,t)$ denotes the best possible bottleneck value of a $t$-subset ending at $x_i$. Thus the MPD optimization problem is polynomial-time solvable on the line by dynamic programming, whereas in dimension two it is already NP-hard via the same geometric independent-set machinery that underlies our plane hardness discussion \cite{agarwal2006computing}. By contrast, no analogous left/right state compression works for minimum Riesz $s$-energy: when a new point $x_i$ is added, the extra cost is
\[
\sum_{x\in S}|x_i-x|^{-s},
\]
which depends on the full internal configuration of the already selected subset $S$, not merely on its cardinality or its extreme points. In particular, the naive Bellman principle that suffices for MPD on the line breaks down for Riesz energy (see \cite{Emmerich2025}, where also a concrete counterexample is provided). Thus the one-dimensional Euclidean case remains open here for a substantive structural reason, rather than because of a missing implementation detail. 

Notably, however, the Riesz $s$-energy on ordered point sets exhibits considerable positive structure.
For example, it is straightforward to show that it satisfies a Monge
condition, which can be exploited to obtain algorithmic speedups and
certificates for multilevel dynamic programming. This may be useful, for
instance, when considering higher-order neighborhoods in the Bellman
recursion. A detailed discussion of these aspects, however, exceeds the
scope of the present paper and is left for future work.

In the limit $s \rightarrow \infty$, the picture becomes clearer, because the limiting maximum minimum pairwise distance (MPD) problem on the line admits a polynomial time 
algorithm based on a simple greedy feasibility test. 
To test whether there exists a \(k\)-subset
with MPD at least \(\tau\), start with the leftmost point and then scan the
points from left to right, always selecting the first point whose distance
from the last selected point is at least \(\tau\). At the end of the scan,
one checks whether at least \(k\) points have been selected. This greedy test
is exact: choosing each feasible point as far to the left as possible cannot
reduce the remaining feasible space to its right.
The optimal MPD value is one of the pairwise distances \(x_j-x_i\),
\(1\le i<j\le n\). Hence an exact implementation can explicitly form and sort
all pairwise distances and then apply binary search with the greedy
feasibility test. This gives an \(O(n^2\log n)\) time bound and \(O(n^2)\)
storage. Alternatively, one may apply the Frederickson--Johnson selection and
ranking algorithm for sorted matrices to the implicit sorted matrix of
pairwise differences~\cite{frederickson1984generalized}. This avoids
materializing the \(O(n^2)\) pairwise distances and yields an
\(O(n\log n)\)-type exact bound after the input points have been sorted, but
requires the nontrivial sorted-matrix-search machinery.

\paragraph{Parameterized complexity.}
The problem also invites parameterized analysis in $k$, dimension, or structural parameters of thresholded interaction graphs. We leave this aside here in order to keep the paper focused on the primary hard-versus-tractable dichotomy. Please refer to \cite{Atta2026} for a first result in the direction of PTAS constructions.  

\section{Complexity landscape}

The results of this paper isolate a compact but informative landscape for minimum Riesz $s$-energy subset selection. On the hard side, the problem is NP-hard already on finite metric spaces for every fixed $s>0$, and it remains NP-hard in the Euclidean plane --- hence in every fixed Euclidean dimension $d\ge 2$ --- when the exponent is part of the input. Thus neither unrestricted metric structure nor low-dimensional Euclidean geometry is, by itself, enough to make the problem easy.

On the tractable side, finite ultrametric spaces admit exact dynamic programming because the interaction between two child subproblems depends only on the selected cardinalities and the common cross-subtree distance. This identifies ultrametrics as a genuine tractable island rather than a degenerate case. For the comparison with minimum pairwise distance (MPD), a central ingredient is the known finite-instance large-$s$ observation that every Riesz $s$-energy minimizer is eventually MPD optimal. This explains why large-$s$ behavior increasingly resembles bottleneck dispersion even though the finite-$s$ objective is globally additive over all pairs.

The low-dimensional Euclidean comparison is particularly revealing. For minimum pairwise distance (MPD) on the line, the ordered structure supports a Bellman-style recurrence, whereas for Riesz $s$-energy the same state compression fails because adding a new point contributes a sum over all previously selected points. Thus the one-dimensional Riesz case remains open for a structural reason, not merely because the right implementation has not yet been found. Table~\ref{tab:landscape} summarizes this picture.

\begin{table}[t]
\centering
\small
\begin{tabular}{p{5cm}p{5cm}p{4.3cm}}
\toprule
Setting & Minimum Riesz $s$-energy subset selection & Large-$s$ / minimum-pairwise-distance viewpoint \\
\midrule
General finite metric spaces & NP-hard for every fixed $s>0$ (Theorem~\ref{thm:metric}); general-metric hardness already appears in \cite{PereverdievaEtAl2025} & NP-hard \cite{agarwal2006computing} \\
Euclidean plane, $s$ in input & NP-hard (Theorem~\ref{thm:gis}) & Consistent with the fact that the MPD optimization problem is already NP-hard in the plane \\
Euclidean plane, fixed exponent $s$  & Open in this paper; for $s>2$, Appendix~\ref{app:fowler} gives an $O(k)$ far-field bound for Fowler-style gadgets. & NP-hard in 2-D \cite{agarwal2006computing} \\
Finite ultrametric spaces & Exact DP in $O(nk^2)$  (Theorem~\ref{thm:ultra}, Corollary~\ref{cor:ultra-complexity}) &  Exact DP in $O(nk^2)$; comparison uses the known finite-instance large-$s$ link to MPD optimality (Proposition~\ref{thm:limit}) \\
One-dimensional Euclidean inputs & Open in this paper; left--right DP is not exact for Riesz energy. & Polynomial-time solvable for MPD, e.g., by a Bellman-style left--right dynamic program;  \\
\bottomrule
\end{tabular}
\caption{Summary of the hard and tractable cases isolated in this paper.}
\label{tab:landscape}
\end{table}

\section{Conclusion}
Minimum Riesz $s$-energy subset selection exhibits a concise and, to our
knowledge, previously unrecorded complexity picture. Building on the earlier
general-metric hardness context established in the comparative-analysis work of
Pereverdieva et al.~\cite{PereverdievaEtAl2025}, we showed that the problem
remains hard in the Euclidean plane when the exponent $s$ is part of the input.
At the same time, finite ultrametric spaces admit exact dynamic programming,
showing that hierarchical metrics recover nontrivial algorithmic structure.
Together with the known finite-instance large-$s$ link to minimum pairwise distance (MPD),
this shows that the problem naturally interpolates between pairwise repulsion
and bottleneck-style separation. This makes the boundary between general
geometry, the one-dimensional ordered setting, and hierarchical structure
especially interesting: ultrametrics are exactly solvable, the Euclidean plane
is hard, and the one-dimensional case remains structurally subtle because the
natural state compression that works for MPD does not directly transfer to
Riesz $s$-energy. Beyond exact complexity, recent work also points toward the need to explore practical
algorithmic directions, including fast Riesz-based subset-selection heuristics
in multiobjective optimization and approximation results for related geometric
facility-location variants~\cite{FalconCardonaEtAl2025,Emmerich2025,Atta2026}.

\appendix

\section{Appendix A: Dynamic programming on finite ultrametric spaces}\label{app:ultra}

This appendix gives a fuller discussion of the ultrametric dynamic programming result stated in the main text. We first derive the recurrence, then formulate the corresponding algorithm explicitly, and finally illustrate it on a small abstract tree example and report a brief numerical validation.

\subsection{Derivation of the recurrence}

Let $u$ be an internal node of a rooted binary ultrametric tree with children $v$ and $w$. Denote by $L(u)$ the leaves below $u$, and let $\Delta_u$ be the common distance between every leaf in $L(v)$ and every leaf in $L(w)$. If $A\subseteq L(v)$ and $B\subseteq L(w)$, then every cross pair contributes the same amount $\Delta_u^{-s}$ to the Riesz energy. Therefore
\[
E_s(A\cup B)=E_s(A)+E_s(B)+|A||B|\,\Delta_u^{-s}.
\]
This identity is the only nontrivial structural ingredient. Once it is available, the Bellman recurrence follows immediately by minimizing over all splits of a target cardinality.

\begin{proof}[Proof of Theorem~\ref{thm:ultra}]
Any subset $S\subseteq L(u)$ of size $t$ decomposes uniquely as $S=A\cup B$ with $A\subseteq L(v)$ and $B\subseteq L(w)$. Every pair inside $A$ contributes to $E_s(A)$, every pair inside $B$ contributes to $E_s(B)$, and every cross pair contributes the same amount $\Delta_u^{-s}$. Since there are $|A||B|$ cross pairs, we obtain
\[
E_s(S)=E_s(A)+E_s(B)+|A||B|\,\Delta_u^{-s}.
\]
Minimizing over all splits $t=|A|+|B|$ yields the stated recurrence. The leaf initialization is immediate from the definition of Riesz $s$-energy.
\end{proof}

\begin{proof}[Proof of Corollary~\ref{cor:ultra-complexity}]
The recurrence is evaluated bottom-up. For each internal node and each target cardinality $t\le k$, all splits $t_v+t_w=t$ are tried. This is the same counting as in a knapsack convolution, hence $O(k^2)$ work per internal node and $O(nk^2)$ total work over the tree. Storing one table of size $O(k)$ per node gives $O(nk)$ space.
\end{proof}

\subsection{Algorithmic form}

Algorithm~\ref{alg:ultra-dp} records the resulting exact dynamic program.  A fully documented reference implementation and validation suite for this
algorithm is available at
\url{https://github.com/emmerichmtm/RieszSEnergyUltrametricsDP}.

\begin{algorithm}[h]
\caption{Exact DP for minimum Riesz $s$-energy subset selection on a rooted binary ultrametric tree}
\label{alg:ultra-dp}
\begin{algorithmic}[1]
\Require Rooted binary ultrametric tree $T$, target size $k$, exponent $s>0$
\ForAll{leaves $\ell$ of $T$}
    \State set $F_\ell(0)\gets 0$ and $F_\ell(1)\gets 0$
\EndFor
\ForAll{internal nodes $u$ of $T$ in postorder}
    \State let $v,w$ be the children of $u$
    \For{$t=0,1,\dots,k$}
        \State initialize $F_u(t)\gets +\infty$
        \For{$t_v=0,1,\dots,t$}
            \State $t_w\gets t-t_v$
            \State $cand\gets F_v(t_v)+F_w(t_w)+t_v t_w\,\Delta_u^{-s}$
            \State $F_u(t)\gets \min\{F_u(t),cand\}$
        \EndFor
    \EndFor
\EndFor
\State \Return the root value $F_r(k)$
\end{algorithmic}
\end{algorithm}

\subsection{Worked abstract tree example}

We illustrate the recurrence on a small abstract rooted tree with six taxa. The tree consists of two cherries, $\{a,b\}$ and $\{c,d\}$, together with two further taxa $e$ and $f$. The first cherry merges at height $3.5$, the second at height $4$, and the two resulting groups merge at height $5.5$. Figure~\ref{fig:primate-tree-app} shows the corresponding stylized ultrametric tree. Such numbers may be interpreted abstractly as times in a phylogenetic tree, as in biodiversity applications, or as cophenetic heights in a hierarchical clustering of cultural vectors~\cite{HartmannSteel2006,StivalaEtAl2014}.

\begin{figure}[h]
\centering
\begin{tikzpicture}[
  every node/.style={font=\small},
  time/.style={circle, draw, inner sep=1.2pt, fill=white}
]
\node[time,label=above:{5.5}] (r) at (0,2.4) {};
\node[time,label=above:{3.5}] (l1) at (-3,1.2) {};
\node[time,label=above:{4}] (l2) at (-0.8,1.2) {};
\node[draw=none] (e) at (1.4,0) {$e$};
\node[draw=none] (f) at (3.0,0) {$f$};
\draw (r)--(l1);
\draw (r)--(l2);
\draw (r)--(e);
\draw (r)--(f);
\node[draw=none] (a) at (-3.6,0) {$a$};
\node[draw=none] (b) at (-2.4,0) {$b$};
\node[draw=none] (c) at (-1.4,0) {$c$};
\node[draw=none] (d) at (-0.2,0) {$d$};
\draw (l1)--(a);
\draw (l1)--(b);
\draw (l2)--(c);
\draw (l2)--(d);
\end{tikzpicture}
\caption{An abstract ultrametric tree on six taxa with two cherries. The pairs $\{a,b\}$ and $\{c,d\}$ form cherries, while $e$ and $f$ are attached directly at the top level. Distances between leaves are determined by the height of the least common ancestor. The numerical labels are illustrative only.}
\label{fig:primate-tree-app}
\end{figure}

If leaf-to-leaf distances are taken as twice the least-common-ancestor heights, then
\[
d(a,b)=7,
\qquad
 d(c,d)=8,
\qquad
 d(x,y)=11
\]
for all other distinct pairs $x,y$. Take $s=1$. Then the pair $\{a,b\}$ has energy $1/7$, the pair $\{c,d\}$ has energy $1/8$, and every pair drawn from different top-level branches has energy $1/11$.
For three-taxon subsets, one obtains for instance
\[
E_1(\{a,c,e\})=\frac{1}{11}+\frac{1}{11}+\frac{1}{11}=\frac{3}{11}\approx 0.272727,
\]
whereas
\[
E_1(\{a,b,e\})=\frac{1}{7}+\frac{1}{11}+\frac{1}{11}=\frac{25}{77}\approx 0.324675.
\]
Thus the energy-minimizing three-taxon subsets avoid the closer cherry pair $\{a,b\}$ whenever possible, exactly as the recurrence predicts.

\subsection{Numerical validation and implementation}

To validate the recurrence computationally, we implemented the ultrametric DP and compared its output against brute-force enumeration on a deterministic set of small random rooted binary ultrametric trees. On all tested instances the dynamic-programming optimum matched the brute-force optimum up to floating-point roundoff. A small reference Python implementation accompanies the present work See: \url{https://github.com/emmerichmtm/RieszSEnergyUltrametricsDP}.

\section{Appendix B: A Fowler-style fixed-\texorpdfstring{$s$}{s} route in the Euclidean plane and its limitation}
\label{app:fowler}

This appendix records an alternative route toward Euclidean hardness at a fixed exponent $s$, inspired by the classical gadget construction of Fowler, Paterson, and Tanimoto~\cite{FowlerPatersonTanimoto1981} and translated here from unit squares to unit discs. The aim is not to claim a complete fixed-$s$ hardness proof, but to document a geometrically natural approach and to isolate precisely where the present estimates stop short. At the same time, the analysis yields a useful structural fact: in the relevant gadget class, the admissible far-field Riesz energy is only \emph{linear} in the number of selected points, rather than quadratic as a naive pair-count estimate would suggest.

\subsection{Geometric idea}

The intended reduction route is directly from \textsc{3SAT}. As in Fowler et al., one arranges wire gadgets with two states, crossover gadgets preserving the transmitted state, and clause gadgets that allow one additional choice if and only if a literal is satisfied. In the present context, the choices are represented by points in the Euclidean plane, or equivalently by equal-radius discs centred at those points.

The geometric idea is simple. A forbidden local configuration forces two selected discs to come too close, creating a very short centre-to-centre distance and therefore a large Riesz contribution. One may therefore hope to separate satisfying from unsatisfying assignments by making every inconsistent choice incur such an overlap penalty, while every consistent choice incurs only smaller contributions from more distant interactions.

Figures~\ref{fig:boolean-circuit-app}--\ref{fig:unit-disk-gadgets-app} reproduce the figure structure from the longer manuscript and illustrate this idea.

\begin{figure}[h]
{
  \begin{center}
  \begin{minipage}[b]{0.4\linewidth}
    \centering
    \resizebox{\linewidth}{!}{\TikzBooleanCircuit}
    \caption{Boolean circuit gadget.}\label{fig:boolean-circuit-app}
  \end{minipage}
  \hfill
  \begin{minipage}[b]{0.4\linewidth}
    \centering
    \resizebox{\linewidth}{!}{\TikzWireGadget}
    \caption*{Wire gadget.}
  \end{minipage}

  \vspace{1em}

  \begin{minipage}[b]{0.4\linewidth}
    \centering
    \resizebox{\linewidth}{!}{\TikzCrossoverGadget}
    \caption*{Crossover gadget.}
  \end{minipage}
  \hfill
  \begin{minipage}[b]{0.4\linewidth}
    \centering
    \resizebox{\linewidth}{!}{\TikzClauseGadget}
    \caption*{Clause gadget.}
  \end{minipage}
  \end{center}
}
\caption{Disk-based gadget layout in the Euclidean plane following the structure of Fowler, Paterson, and Tanimoto~\cite{FowlerPatersonTanimoto1981}: a Boolean circuit encoding of a 3SAT instance, wire gadgets with two states, crossover gadgets preserving the transmitted state, and clause gadgets in which one extra disk fits if and only if a literal is satisfied. In a full fixed-$s$ proof attempt, admissible selections would correspond to non-overlapping choices consistent with a satisfying assignment.}
\label{fig:unit-disk-gadgets-app}
\end{figure}

We arrange the disc centres so that every admissible selected pair is at distance at least $2r$, while any forced conflict creates a pair at distance at most $\tfrac{3}{2}r$. Thus a forbidden overlap contributes at least
\[
\left(\tfrac{3}{2}r\right)^{-s},
\]
whereas an admissible pair contributes at most
\[
(2r)^{-s}.
\]
This produces a constant local gap.

\begin{lemma}[Local overlap gap]
\label{lem:gap-app}
At exponent $s>0$, any forced-overlap pair of selected discs whose centres have distance at most $\tfrac{3}{2}r$ contributes at least
\[
\left(\tfrac{3}{2}r\right)^{-s}
\]
to the Riesz energy, while any admissible non-overlap pair whose centres have distance at least $2r$ contributes at most
\[
(2r)^{-s}.
\]
\end{lemma}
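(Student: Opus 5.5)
The claim follows from one monotonicity fact: for fixed $s>0$, the map $\rho\mapsto\rho^{-s}$ is strictly decreasing on $(0,\infty)$. I will argue this directly. For $0<\rho_1<\rho_2$ we have $\rho_1^{-s}/\rho_2^{-s}=(\rho_2/\rho_1)^{s}>1$, because $\rho_2/\rho_1>1$ and $s>0$. Equivalently, one may take logarithms as in the proof of Lemma~\ref{lem:aux-exp-choice}: the inequality $\rho_1^{-s}\ge\rho_2^{-s}$ is equivalent to $-s\log\rho_1\ge -s\log\rho_2$, which holds because $\log$ is increasing. This monotonicity is the only ingredient.

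Next I apply it to the two pair types. Take a forced-overlap pair of selected centres $x,y$ with $0<\|x-y\|\le\tfrac32 r$. Distinct selected points have positive distance, so the term is finite, and monotonicity gives $\|x-y\|^{-s}\ge(\tfrac32 r)^{-s}$. Every other term of $E_s(S)$ is nonnegative. Hence this single pair already contributes at least $(\tfrac32 r)^{-s}$, and the same bound holds for the total energy. For an admissible pair with $\|x-y\|\ge 2r$, the same monotonicity gives $\|x-y\|^{-s}\le(2r)^{-s}$.

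Finally, I will note the consequence used afterwards. The ratio of the two bounds is $(2r)^{s}/(\tfrac32 r)^{s}=(4/3)^{s}>1$. It is independent of $r$, consistent with the homogeneity remark $E_s(\lambda X)=\lambda^{-s}E_s(X)$, and it is strictly greater than $1$ for every $s>0$. This is what "constant local gap" means. No step is a real obstacle. The only point needing care is that an overlapping pair still consists of distinct centres, so its distance is strictly positive and $\|x-y\|^{-s}$ is a well-defined finite term.
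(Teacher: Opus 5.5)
Your proof is correct and takes the same approach as the paper, which simply invokes the monotonicity of $d\mapsto d^{-s}$ for $s>0$. Your additional points are accurate elaborations of that same argument: the distance between distinct centres is positive, the remaining energy terms are nonnegative, and the two bounds differ by the $r$-independent factor $(4/3)^{s}$.
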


\begin{proof}
This is immediate from monotonicity of the map $d\mapsto d^{-s}$ for $s>0$.
\end{proof}

Lemma~\ref{lem:gap-app} explains why this route is attractive: forbidden local configurations are penalized more heavily than admissible ones. To obtain a reduction, however, one must compare a single such overlap penalty with the \emph{entire} energy of an admissible $k$-point selection. This is where the global budget enters.

\subsection{Packing-based far-field budget in the Euclidean plane}

To bound the admissible side globally, we use the standard packing estimate for equal-radius discs in the Euclidean plane. Figure~\ref{fig:discs-app} is the same packing picture used in the longer draft.

\begin{figure}[h]
\centering
\resizebox{0.3\linewidth}{!}{\TikzHexagonalPacking}
\caption{Packing of equal-radius discs in the Euclidean plane. Around a fixed selected disc, the number of non-overlapping discs in the $i$th layer grows at most linearly in $i$. This leads to a convergent sum for every fixed $s>2$.}
\label{fig:discs-app}
\end{figure}

Fix one selected disc with centre $p$. Around this disc, consider the concentric annuli
\[
A_i(p)=\bigl\{x\in\mathbb{R}^2 : 2ir \le \|x-p\| < 2(i+1)r\bigr\},
\qquad i\ge 1.
\]
Because the selected discs are non-overlapping, only linearly many other disc centres can lie in the $i$th annulus.

\begin{lemma}[Pointwise far-field budget]
\label{lem:pointwise-budget-app}
Fix $s>2$. Let $p$ be the centre of a selected disc in a non-overlapping configuration of equal-radius discs with radius $r$, and let $\zeta$ denote the Riemann zeta function. Then
\[
\sum_{q\in S\setminus\{p\}} \|p-q\|^{-s}
\le
6(2r)^{-s}\sum_{i\ge 1} i^{1-s}
=
6(2r)^{-s}\zeta(s-1).
\]
In particular, for every fixed $s>2$, the total contribution of all other selected points to one selected point is bounded by a constant depending only on $s$ and on the geometric scale $r$.
\end{lemma}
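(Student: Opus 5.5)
The plan is to organize the sum by the annuli $A_i(p)$ and reduce the statement to a counting claim. Every $q\in S\setminus\{p\}$ satisfies $\|p-q\|\ge 2r$ because the selected discs do not overlap, so each such $q$ lies in exactly one $A_i(p)$ with $i\ge 1$. For $q\in A_i(p)$ monotonicity of $d\mapsto d^{-s}$ (as in Lemma~\ref{lem:gap-app}) gives $\|p-q\|^{-s}\le (2ir)^{-s}=(2r)^{-s}i^{-s}$. Writing $n_i:=|S\cap A_i(p)|$, this yields
\[
\sum_{q\in S\setminus\{p\}}\|p-q\|^{-s}\le (2r)^{-s}\sum_{i\ge 1} n_i\, i^{-s}.
\]
The stated bound follows once we show $\sum_i n_i i^{-s}\le 6\sum_i i^{1-s}$. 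The identification with $\zeta(s-1)$ is then just the definition, and it converges exactly when $s-1>1$, i.e.\ $s>2$. This is why the hypothesis $s>2$ appears, and the ``in particular'' clause is immediate.

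The pointwise shell bound $n_i\le 6i$ is more than needed. Since $i\mapsto i^{-s}$ is decreasing, Abel summation shows it suffices to have the cumulative bound
\[
N_m:=\sum_{i=1}^m n_i\le \sum_{i=1}^m 6i=3m(m+1)\qquad\text{for all } m\ge 1.
\]
Indeed, $\sum_i n_i i^{-s}=\sum_m N_m\bigl(m^{-s}-(m+1)^{-s}\bigr)$ has nonnegative weights, so it is dominated by the same expression with $N_m$ replaced by $3m(m+1)$, which resums to $6\sum_i i^{1-s}$. To bound $N_m$ I would compare with the hexagonal packing of Figure~\ref{fig:discs-app}, whose $i$th layer around a central disc contains exactly $6i$ discs. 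The argument would be a packing-density argument: the discs of radius $r$ centred in $\{2r\le\|x-p\|<2(m+1)r\}$ are disjoint and avoid the disc around $p$, and the claim is that no such configuration beats the hexagonal count.

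The main obstacle is precisely this counting step with the sharp constant $6$. A crude area argument only gives $n_i\le 16i+8$. Moreover, the Euclidean annuli $A_i(p)$ are not the hexagonal layers. Already for $i=1$ the annulus $2r\le\|x-p\|<4r$ can contain more than six centres at mutual distance $\ge 2r$: near the outer radius the circumference admits roughly $2\pi\cdot 4r/(2r)\approx 12$. So $n_i\le 6i$ fails pointwise, and even $N_1\le 6$ is doubtful. My first attempt would be to absorb this by the Abel-summation slack together with an angular-separation count. Two centres at distance $\rho\ge 2ir$ that are $2r$ apart subtend an angle $\ge 2\arcsin(r/\rho)$, giving roughly $\pi(i+1)$ centres per circle. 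If this cannot be pushed down to $3m(m+1)$ cumulatively, the honest fallback is one of two options. The first is to redefine the layers combinatorially as hexagonal layers, which is what Figure~\ref{fig:discs-app} depicts and for which the count $6i$ is exact. The second is to keep the Euclidean annuli and replace the constant $6$ by a larger absolute constant. Either way the structural conclusion, a bound $C(2r)^{-s}\zeta(s-1)$ that is finite for every fixed $s>2$, is unaffected.
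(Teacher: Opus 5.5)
Your suspicion about the counting step is correct, and it is decisive rather than technical. The displayed inequality with the constant $6$ is false, so no completion of your argument, whether by Abel summation or otherwise, can prove the lemma as stated. The paper's own proof simply asserts $n_i\le 6i$ for the Euclidean annuli $A_i(p)$, which is exactly the claim you could not verify.

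That claim already fails for the hexagonal packing of Figure~\ref{fig:discs-app}. With spacing $2r$, $A_1(p)$ contains the six neighbours at distance $2r$ and six more centres at distance $2\sqrt{3}\,r<4r$. Take $S$ to be $p$ together with these twelve centres and $s=8$. The left-hand side is $(2r)^{-8}(6+6\cdot 3^{-4})\approx 6.074\,(2r)^{-8}$, whereas the right-hand side is $6\zeta(7)(2r)^{-8}\approx 6.050\,(2r)^{-8}$. (Dilate slightly if touching discs are disallowed.) At the Basel exponent $s=3$ the full hexagonal lattice sum is about $11.0\,(2r)^{-3}$, so a large finite patch beats the claimed $\pi^2(2r)^{-3}\approx 9.87\,(2r)^{-3}$. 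The Basel-case remark is therefore false as stated as well.

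Your cumulative bound $N_m\le 3m(m+1)$ also fails, for every $m$. This packing has $N_1=12$ and $N_2=30$, and by density $N_m\sim\frac{2\pi}{\sqrt{3}}m^2$ with $\frac{2\pi}{\sqrt{3}}\approx 3.63>3$. So Abel summation cannot rescue the constant.

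Your first fallback does not give a valid general statement. Hexagonal layers are defined only for the lattice configuration. Even there the per-point bound $(2ir)^{-s}$ is lost, since the second hexagonal layer already contains six centres at distance $2\sqrt{3}\,r<4r$.

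Your second fallback is the right repair, and your own area count suffices. Discs centred in $A_i(p)$ lie in the annulus with radii $(2i-1)r$ and $(2i+3)r$, so $n_i\le 16i+8$ and
\[
\sum_{q\in S\setminus\{p\}}\|p-q\|^{-s}\le (2r)^{-s}\bigl(16\,\zeta(s-1)+8\,\zeta(s)\bigr)\le 24\,(2r)^{-s}\zeta(s-1).
\]
This proves the ``in particular'' clause, and Corollary~\ref{cor:linear-budget-app} with a larger $C_s$, which is all the appendix uses downstream. But it is a corrected lemma with a different constant, not the displayed inequality.
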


\begin{proof}
In the Euclidean plane, at most $6i$ non-overlapping equal-radius discs can have centres in the $i$th annulus around the centre $p$ of a fixed selected disc. Every such centre is at distance at least $2ir$ from $p$, hence each such point contributes at most
\[
(2ir)^{-s}.
\]
Therefore the total contribution from the $i$th annulus is at most
\[
6i\,(2ir)^{-s}=6(2r)^{-s}i^{1-s}.
\]
Summing over all annuli gives
\[
\sum_{q\in S\setminus\{p\}} \|p-q\|^{-s}
\le
6(2r)^{-s}\sum_{i\ge 1} i^{1-s}
=
6(2r)^{-s}\zeta(s-1),
\]
which converges for every fixed $s>2$.
\end{proof}

\begin{remark}[Basel problem special case]
\label{rem:basel-app}
For the concrete case $s=3$, Lemma~\ref{lem:pointwise-budget-app} becomes
\[
\sum_{q\in S\setminus\{p\}} \|p-q\|^{-3}
\le
6(2r)^{-3}\sum_{i\ge 1} i^{-2}
=
6(2r)^{-3}\zeta(2)
=
6(2r)^{-3}\frac{\pi^2}{6}.
\]
Thus the classical Basel value $\zeta(2)=\pi^2/6$ appears naturally as the packing-based bound on the total contribution to one selected point.
\end{remark}

The pointwise estimate immediately implies a linear global bound.

\begin{corollary}[Linear admissible-energy budget]
\label{cor:linear-budget-app}
For every fixed $s>2$, there exists a constant $C_s>0$ such that every admissible non-overlapping $k$-point selection $S$ in this gadget class satisfies
\[
E_s(S)\le C_s\,k.
\]
\end{corollary}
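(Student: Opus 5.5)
The plan is to sum the pointwise estimate of Lemma~\ref{lem:pointwise-budget-app} over all selected points. Since every unordered pair is counted twice in the double sum, we have
\[
E_s(S)=\frac12\sum_{p\in S}\sum_{q\in S\setminus\{p\}}\|p-q\|^{-s}.
\]
Each of the $k$ inner sums is bounded by $6(2r)^{-s}\zeta(s-1)$, which gives
\[
E_s(S)\le 3(2r)^{-s}\zeta(s-1)\,k .
\]
We may then set $C_s:=3(2r)^{-s}\zeta(s-1)$, which is finite and positive for every fixed $s>2$ because $\zeta(s-1)$ converges when $s-1>1$. The radius $r$ is the fixed geometric scale of the gadget class, so $C_s$ depends only on $s$ and that scale, and not on $k$ or on the instance.

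The step to check carefully is the hypothesis of Lemma~\ref{lem:pointwise-budget-app}: every point of an admissible selection must have all other selected points at distance at least $2r$. In the gadget construction, admissibility means exactly that all selected pairs are at distance at least $2r$, so the discs of radius $r$ are non-overlapping. Each $p\in S$ therefore satisfies the hypothesis, and every $q\ne p$ lies in some annulus $A_i(p)$ with $i\ge1$. No point falls inside distance $2r$, so there is no $i=0$ term to handle.

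The main obstacle lies inside the cited lemma itself: the counting claim that at most $6i$ centres of non-overlapping radius-$r$ discs lie in $A_i(p)$. I would justify this by area packing. Such discs are contained in the enlarged annulus $(2i-1)r\le\|x\|<(2i+3)r$, whose area is linear in $i$, so an area count gives $O(i)$ centres. The constant $6$ might need enlarging, for example to a larger absolute constant such as $16$. That would change only the value of $C_s$ and not the linear-in-$k$ conclusion. Since the corollary only asserts the existence of some constant $C_s$, I would state the bound with whatever constant the packing argument actually yields.
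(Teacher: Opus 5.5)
Your argument is the paper's proof: halve the double sum and bound each of the $k$ inner sums by the pointwise far-field estimate, which gives $C_s=3(2r)^{-s}\zeta(s-1)$. Your suspicion about the lemma is justified: the hexagonal lattice already places $12$ centres (at distances $2r$ and $2\sqrt3\,r$) in $A_1(p)$, so ``at most $6i$'' is false as stated. Your area count repairs this, but it yields $16i+8\le 24i$ centres per annulus rather than $16i$. That only changes the constant to $C_s=12(2r)^{-s}\zeta(s-1)$ and leaves the linear-in-$k$ conclusion intact.
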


\begin{proof}
Write the unordered Riesz energy as
\[
E_s(S)
=
\frac12\sum_{p\in S}\sum_{q\in S\setminus\{p\}}\|p-q\|^{-s}.
\]
By Lemma~\ref{lem:pointwise-budget-app}, each inner sum is bounded by the same constant
\[
6(2r)^{-s}\zeta(s-1).
\]
Hence
\[
E_s(S)\le
\frac12\,k\cdot 6(2r)^{-s}\zeta(s-1),
\]
which proves the claim.
\end{proof}

\begin{remark}
Corollary~\ref{cor:linear-budget-app} is the key structural message of this appendix. A naive global estimate would bound the admissible energy by $O(k^2)$ merely because there are $\binom{k}{2}$ pairs. The packing argument shows that this is far too crude: in the present Euclidean-plane gadget class, the admissible far-field energy is only $O(k)$.
\end{remark}

\subsection{Why this still does not close a fixed-$s$ reduction}

Suppose now that a selected set contains at least one forbidden overlap pair. Then by Lemma~\ref{lem:gap-app} its energy is at least
\[
\left(\tfrac{3}{2}r\right)^{-s},
\]
plus the contributions of all remaining pairs. On the admissible side, Corollary~\ref{cor:linear-budget-app} gives the upper bound
\[
E_s^{\mathrm{adm}}(k)\le C_s\,k.
\]
Thus the overlap penalty must be compared not with a quadratic admissible budget, but with a linear one.

\begin{proposition}[Interpretation of the obstruction]
\label{prop:fowler-obstruction-app}
The Fowler-style route in the Euclidean plane shows that the obstruction to a fixed-$s$ reduction is not a quadratic explosion of admissible pair contributions. Rather, the admissible-energy budget is only linear in $k$, but still grows with $k$. Therefore a uniform reduction at fixed $s$ would require overlap penalties strong enough to dominate that linear growth.
\end{proposition}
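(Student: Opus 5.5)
The statement is interpretive, so the plan is to make its content precise as a comparison of two explicit quantities and then read off the consequence. On the penalty side, \cref{lem:gap-app} shows that any $k$-subset containing a forced-overlap pair has energy at least $\left(\tfrac{3}{2}r\right)^{-s}$. This bound is independent of $k$, because all other pair terms are nonnegative and we discard them. On the admissible side, \cref{cor:linear-budget-app} bounds every admissible selection by $E_s(S)\le C_s k$, with $C_s=3(2r)^{-s}\zeta(s-1)$ for $s>2$. The first step is therefore to record that the naive $\binom{k}{2}(2r)^{-s}$ budget is replaced by the linear budget $C_s k$. This settles the first half of the claim: the obstruction is not a quadratic explosion.

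The second step is to show that the linear growth is genuine, so that it cannot be improved to $O(1)$ within this gadget class. Here I would take an admissible Fowler-style configuration consisting of a long wire gadget with many consecutive selected discs. In such a wire, consecutive centres sit at distance at most some constant $c\,r$ with $c\ge 2$, since the gadget is built from near-tangent discs. Each selected point then has a neighbour contributing at least $(cr)^{-s}$, so $E_s(S)\ge \tfrac{k}{2}(cr)^{-s}$. Hence admissible energy is $\Theta(k)$ for every fixed $s$. (For $s\le 2$ the budget is even larger, because the zeta series diverges, which only strengthens the obstruction.)

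The third step is the comparison itself. The threshold-style separation used in \cref{thm:gis} requires
\[
\left(\tfrac{3}{2}r\right)^{-s} > C_s k,
\]
which is equivalent to $(4/3)^s > 3\zeta(s-1)\,k$. As $k\to\infty$ with $s$ fixed this fails, since the left side is constant. The inequality holds only if $s$ grows like $\log k/\log(4/3)$. This is the precise sense in which the required exponent depends on $k$, and I would phrase the conclusion as a necessary condition on any single-overlap-penalty reduction.

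The main obstacle is the necessity direction. One must argue that a fixed-$s$ reduction using this gadget class cannot escape the comparison by exploiting the additional energy of the remaining pairs in a non-admissible set. I would handle this by restricting the claim to threshold reductions that certify non-admissibility through a single overlap term, exactly as in \cref{thm:gis} and \cref{thm:limit}. Within that restricted scope, the lower bound $\Theta(k)$ on admissible energy together with the $k$-independent overlap penalty closes the argument.
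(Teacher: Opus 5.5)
Your first and third steps are exactly the paper's proof. The paper sets the $k$-independent penalty $(\tfrac{3}{2}r)^{-s}$ from Lemma~\ref{lem:gap-app} against the budget $C_s k$ from Corollary~\ref{cor:linear-budget-app}, and notes that for a reduction \emph{based solely on this gap} the penalty cannot dominate the budget uniformly in $k$. You restrict to the same scope. Your second step is a genuine addition. The paper never shows that admissible energy actually grows with $k$; it only shows that the upper bound does. Since an upper bound cannot exhibit an obstruction, its proof can conclude only that the \emph{present estimates} fail to close a fixed-$s$ reduction. Your lower bound $E_s(S)\ge\tfrac{k}{2}(cr)^{-s}$, valid for every $s>0$, turns this into a real necessity statement for single-overlap threshold certification. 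Your explicit condition $(4/3)^s>3\zeta(s-1)k$ adds a further point: the required exponent is logarithmic in $k$, roughly $\log(3k)/\log(4/3)$. The naive quadratic budget needs roughly $\log\binom{k}{2}/\log(4/3)$, so the linear budget saves only about a factor of two in $s$. This is more accurate than the main-text remark, which says the exponent grows linearly in $k$.

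There are three small caveats. First, your lower bound assumes the gadget property that every selected disc has a selected neighbour within distance $cr$; this is plausible but not established in the paper. Second, for $s\le 2$ the divergence of $\zeta(s-1)$ only removes the upper bound; it does not show the energy is larger. So write $\Omega(k)$ rather than $\Theta(k)$ there, and base the obstruction on your lower bound. Third, the constant $3\zeta(s-1)$ inherits the $6i$ annulus count from Lemma~\ref{lem:pointwise-budget-app}, and that count is too small: the hexagonal packing already places $12$ centres in $A_1$. An area argument still gives $O(i)$ centres per annulus, so only the constant changes and your conclusion stands.
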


\begin{proof}
By Corollary~\ref{cor:linear-budget-app}, every admissible $k$-point selection has energy at most $C_s k$. A single forbidden overlap contributes at least $\left(\tfrac{3}{2}r\right)^{-s}$ by Lemma~\ref{lem:gap-app}. For a reduction based solely on this gap, one would need the latter quantity to dominate the former uniformly in $k$. Since the admissible budget still grows linearly with $k$, the present estimates do not by themselves yield a fixed-$s$ reduction.
\end{proof}

This is the main point of the appendix. The failure of the fixed-$s$ route is subtler than a naive pair-count argument would suggest. The obstruction is \emph{not} that admissible long-range interactions explode quadratically. On the contrary, packing controls the per-point contribution by a constant and hence the total admissible energy linearly in $k$. The real issue is that even this sharpened linear budget must still be dominated by the penalty of a forbidden overlap.

This also explains why the successful Euclidean hardness proof lets the exponent depend on $k$. Increasing $s$ amplifies the ratio between the forbidden short distance $\tfrac{3}{2}r$ and the admissible distance $2r$, so that one forbidden pair can dominate the entire linear admissible-energy budget.

\begin{remark}
The value of the Fowler-style route is therefore twofold. First, it provides a natural geometric mechanism for encoding inconsistency by overlap penalties in the Euclidean plane. Second, it clarifies why the successful reduction uses an exponent depending on $k$: not because the admissible far-field energy is uncontrolled, but because even its sharp linear bound still grows with $k$.
\end{remark}


\begin{thebibliography}{99}

\bibitem{AgarwalVanKreveldSuri1998}
P. K. Agarwal, M. van Kreveld, and S. Suri.
\newblock Label placement by maximum independent set in rectangles.
\newblock \emph{Computational Geometry}, 11(3--4):209--218, 1998.

\bibitem{agarwal2006computing}
P.~K.~Agarwal, M.~Overmars, and M.~Sharir, Computing maximally separated sets in the plane,
\emph{SIAM Journal on Computing},
36(3):815--834, 2006.

\bibitem{Atta2026}
S. Atta.
\newblock Riesz Energy Minimization Facility Location Problem in the Plane:
\newblock Complexity and a Polynomial-Time Approximation Scheme.
\newblock {\em Theoretical Computer Science}, 1070:115833, 2026.
\newblock doi:10.1016/j.tcs.2026.115833.

\bibitem{clark1990unit}
B.~N.~Clark, C.~J.~Colbourn, and D.~S.~Johnson,
``Unit disk graphs,''
\emph{Discrete Mathematics},
86(1--3):165--177, 1990.



\bibitem{BorodachovHardinSaff2019}
S. V. Borodachov, D. P. Hardin, and E. B. Saff.
\newblock {\em Discrete Energy on Rectifiable Sets}.
\newblock Springer Monographs in Mathematics.
\newblock Springer, New York, 2019.
\newblock doi:10.1007/978-0-387-84808-0.


\bibitem{Emmerich2026staircases} Emmerich, M. (2026). Exact Dynamic Programming for Solow--Polasky Diversity Subset Selection on Lines and Staircases. arXiv:2604.26929.

\bibitem{ArenasMerklPichlerRiveros2024}
M. Arenas, T.~C. Merkl, R. Pichler, and C. Riveros.
Towards Tractability of the Diversity of Query Answers: Ultrametrics to the Rescue.
\emph{Proceedings of the ACM on Management of Data}, 2(5):215:1--215:26, 2024.

\bibitem{BrauchartGrabner2015}
J. S. Brauchart and P. J. Grabner.
\newblock Distributing many points on spheres: minimal energy and designs.
\newblock \emph{Journal of Complexity}, 31(3):293--326, 2015.
\newblock doi:10.1016/j.jco.2015.02.001.

\bibitem{ClarkColbournJohnson1990}
B. N. Clark, C. J. Colbourn, and D. S. Johnson.
\newblock Unit disk graphs.
\newblock \emph{Discrete Mathematics}, 86(1--3):165--177, 1990.

\bibitem{Emmerich2025}
M. T. M. Emmerich.
\newblock Minimum Riesz $s$-Energy Subset Selection in Ordered Point Sets via Dynamic Programming.
\newblock arXiv:2502.01163, 2025.

\bibitem{frederickson1984generalized}
G.~N. Frederickson and D.~B. Johnson.
\newblock Generalized selection and ranking: Sorted matrices.
\newblock \emph{SIAM Journal on Computing}, 13(1):14--30, 1984.
\newblock doi: \href{https://doi.org/10.1137/0213002}{10.1137/0213002}.

\bibitem{FalconCardonaUribeRosas2024}
J. G. Falcón-Cardona, L. Uribe, and P. Rosas.
\newblock Riesz $s$-Energy as a Diversity Indicator in Evolutionary Multi-Objective Optimization.
\newblock {\em IEEE Transactions on Evolutionary Computation}, early online, 2024.
\newblock doi:10.1109/TEVC.2024.3405197.

\bibitem{FalconCardonaEtAl2025}
J. G. Falcón-Cardona, J. Juárez, L. A. Márquez-Vega, and M. T. M. Emmerich.
\newblock Fast High-Diversity Subset Selection for Multi-Objective Optimization by Riesz $s$-Energy.
\newblock {\em IEEE Transactions on Evolutionary Computation}, early online, 2025.
\newblock doi:10.1109/TEVC.2025.3570938.

\bibitem{FowlerPatersonTanimoto1981}
R. J. Fowler, M. S. Paterson, and S. L. Tanimoto.
\newblock Optimal packing and covering in the plane are NP-complete.
\newblock \emph{Information Processing Letters}, 12(3):133--137, 1981.

\bibitem{FejesToth1942}
L. Fejes T\'oth.
\newblock {\"U}ber die dichteste Kugellagerung.
\newblock \emph{Mathematische Zeitschrift}, 48:676--684, 1943.

\bibitem{ghosh1996computational}
J. B. Ghosh.
\newblock Computational aspects of the maximum diversity problem.
\newblock {\em Operations Research Letters}, 19(4):175--181, 1996.
\newblock doi:10.1016/0167-6377(96)00025-9.

\bibitem{GlazkoNei2003}
L. A. Glazko and M. Nei.
\newblock Estimation of divergence times for major lineages of primate species.
\newblock \emph{Molecular Biology and Evolution}, 20(3):424--434, 2003.
\newblock doi:10.1093/molbev/msg050.

\bibitem{HartmannSteel2006}
K. Hartmann and M. Steel.
\newblock Maximizing phylogenetic diversity in biodiversity conservation: greedy solutions to the Noah's Ark problem.
\newblock \emph{Systematic Biology}, 55(4):644--651, 2006.
\newblock doi:10.1080/10635150600873876.

\bibitem{MansonSempleSteel2022}
K. Manson, C. Semple, and M. Steel.
Counting and optimising maximum phylogenetic diversity sets.
\emph{Journal of Mathematical Biology}, 85:11, 2022.
doi:10.1007/s00285-022-01779-3.


\bibitem{MinhKlaereHaeseler2006}
B.~Q. Minh, S. Klaere, and A. von Haeseler.
Phylogenetic Diversity within Seconds.
\emph{Systematic Biology}, 55(5):769--773, 2006.
doi:10.1080/10635150600981604.

\bibitem{PereverdievaEtAl2025}
K. Pereverdieva, A. Deutz, T. Ezendam, T. B\"ack, H. Hofmeyer, and M. T. M. Emmerich.
\newblock Comparative analysis of indicators for multi-objective diversity optimization.
\newblock In \emph{Evolutionary Multi-Criterion Optimization}, volume 15513 of \emph{Lecture Notes in Computer Science}, pages 58--71. Springer, 2025.
\newblock doi:10.1007/978-981-96-3538-2\_5.

\bibitem{RaviRosenkrantzTayi1994}
S. S. Ravi, D. J. Rosenkrantz, and G. K. Tayi.
\newblock Heuristic and special case algorithms for dispersion problems.
\newblock {\em Operations Research}, 42(2):299--310, 1994.
\newblock doi:10.1287/opre.42.2.299.

\bibitem{StivalaEtAl2014}
A. Stivala, G. Robins, Y. Kashima, and M. Kirley.
\newblock Ultrametric distribution of culture vectors in an extended Axelrod model of cultural dissemination.
\newblock \emph{Scientific Reports}, 4:4870, 2014.
\newblock doi:10.1038/srep04870.





\end{thebibliography}
\end{document}